\newtheorem{theorem}{Theorem}[section]
\newtheorem{proposition}[theorem]{Proposition}
\newtheorem{lemma}[theorem]{Lemma}
\newtheorem{corollary}[theorem]{Corollary}
\newtheorem{definition}[theorem]{Definition}
\newtheorem{observation}[theorem]{Observation}
\newcommand{\noproof}{~\hfill$\qed$}
\newcommand{\qedopt}{}
\newcommand{\ShoLong}[2]{#2} 
\title{Flip Distance Between Triangulations of a Simple Polygon is NP-Complete%
\thanks{Preliminary versions appeared as O.\@ Aichholzer, W.\@ Mulzer, and
A.\@ Pilz, \emph{Flip Distance Between Triangulations of a Simple 
Polygon is NP-Complete} in Proc.\@ 29th EuroCG, pp.~115--118, 2013, and
in Proc.\@ 21st ESA, pp.~13--24, 
2013~\cite{eurocg_version,esa_version}.
The final publication is available at Springer via
\protect\url{http://dx.doi.org/10.1007/s00454-015-9709-7}.}
}
\author{Oswin Aichholzer\thanks{Institute for Software Technology, Graz 
University of Technology, Austria.
Partially supported by the ESF EUROCORES
programme EuroGIGA - ComPoSe, Austrian Science Fund (FWF): I~648-N18.
\texttt{oaich@ist.tugraz.at.}}
\and Wolfgang Mulzer\thanks{
Institute of Computer Science, Freie Universit\"at Berlin, Germany.
Supported in part by DFG project MU/3501/1.
\texttt{mulzer@inf.fu-berlin.de.}}
\and Alexander Pilz\thanks{Recipient of a DOC-fellowship of the Austrian
Academy of Sciences at the Institute for Software Technology, Graz University
of Technology, Austria.
Part of this work has been done while this author was visiting the 
Departamento de Matem\'aticas, Universidad de Alcal\'a, Spain.
\texttt{apilz@ist.tugraz.at.}}
}
\begin{document}

\maketitle

\begin{abstract}
Let $T$ be a triangulation of a simple polygon.
A \emph{flip} in~$T$ is the operation of replacing one diagonal of~$T$
by a different one such that the resulting graph is again
a triangulation.  The \emph{flip distance} between two triangulations is the smallest
number of flips required to transform one triangulation into the
other.
For the special case of convex polygons,
the problem of determining the shortest flip distance between two triangulations is equivalent to determining the rotation distance between two binary trees, 
a central problem which is still open after over 25 years of intensive study.

We show that computing the flip distance between two
triangulations of a simple polygon is NP-hard.  This complements a recent
result that shows APX-hardness of determining the flip distance between two
triangulations of a planar point set.
\end{abstract}

\section{Introduction}
Let $P$ be a simple polygon in the plane, that is, a closed region bounded by a 
piece-wise linear, simple cycle.  A \emph{triangulation} of~$P$ is a geometric 
(straight-line) maximal outerplanar graph whose outer face is the complement 
of~$P$ and whose vertex set consists of the vertices of~$P$.  The edges
that are not on the outer face are called \emph{diagonals}.  Let $d$ be a diagonal
whose removal creates a convex quadrilateral. Replacing $d$ with the
other diagonal of the quadrilateral yields another triangulation of~$P$.
This operation is called a \emph{flip}.
The \emph{flip graph} of~$P$ is the abstract graph whose vertices are the
triangulations of~$P$ and in which two triangulations are adjacent if and only if
they differ by a single
flip.  We 
study the \emph{flip distance}, i.e., the minimum number
of flips required to transform a given source triangulation into a target
triangulation.

Edge flips became popular in the context of Delaunay
triangulations.  Lawson~\cite{lawson_connected} proved that any triangulation
of a planar $n$-point set can be transformed into any other by $O(n^2)$ flips. 
Hence, for every planar $n$-point set the flip graph 
is connected with diameter $O(n^2)$.
Later, Lawson showed that in fact every triangulation can be transformed to the Delaunay
triangulation by $O(n^2)$ flips that locally fix the Delaunay 
property~\cite{lawson_delaunay}. 
Hurtado, Noy, and Urrutia~\cite{hurtado_noy_urrutia} gave an example where the 
flip distance is $\Omega(n^2)$, 
and they showed that the same bounds hold for triangulations of simple polygons.
They also proved that if the polygon has $k$ reflex vertices, then
the flip graph has diameter $O(n + k^2)$. In particular, the flip graph of
any planar polygon has diameter $O(n^2)$.  Their result also generalizes the
well-known fact that the flip distance between any two triangulations of a
convex polygon is at most $2n - 10$, for $n > 12$. This was shown by Sleator, Tarjan, and Thurston~\cite{sleator} in their work on the flip distance in convex polygons.
The latter case is particularly 
interesting due to the correspondence between flips in triangulations of
convex polygons and rotations in binary trees:
The dual graph of such a triangulation is a binary tree, and a flip 
corresponds to a rotation in that tree; conversely, for every 
binary tree, a triangulation can be constructed.

We mention two further remarkable results on 
flip graphs for point sets.  Hanke, Ottmann, and
Schuierer~\cite{edge_flipping_distance} showed that the flip distance between
two triangulations is bounded by the number of crossings in their overlay. 
Eppstein~\cite{eppstein} gave a
polynomial-time algorithm for calculating a lower bound on the flip distance.
His bound is tight for point sets with no empty 5-gons; however, except for
small instances, such point sets are not in general position (i.e.,
they must contain collinear triples)~\cite{empty5gon}. 
A recent survey on flips is provided by Bose and Hurtado~\cite{survey}.

Recently, the problem of finding the flip distance between two triangulations of a
point set was shown to be NP-hard by Lubiw and
Pathak~\cite{lubiw} and, independently, by Pilz~\cite{point_set_hard}.
The latter proof was later improved to show APX-hardness of the problem.
A recent paper shows that the problem is fixed-parameter tractable~\cite{flip_distance_fpt}.
Here, we show that the corresponding problem remains NP-hard even for simple polygons.
This can be seen as a further step towards settling the complexity
of deciding the flip distance between triangulations of convex polygons or, equivalently, the rotation distance between binary trees.
This variant of the problem was probably first addressed by Culik and 
Wood~\cite{tree_similarity} in 1982 (showing a flip distance of $2n-6$) in the context of similarity measures between trees.

We now give the formal problem definition:
given a simple polygon~$P$, two triangulations 
$T_1$ and $T_2$ of~$P$, and an integer $l$, decide whether
$T_1$ can be transformed into $T_2$ by at most $l$ flips.
We call this decision problem \textsc{PolyFlip}.
To show NP-hardness, we give a polynomial-time reduction from
the problem \textsc{Rectilinear Steiner Arborescence} to \textsc{PolyFlip}.
\textsc{Rectilinear Steiner Arborescence} was shown to be NP-hard by Shi and Su~\cite{shi_su}.
In Section~\ref{sec:RSA}, we describe the problem in detail. 
We present the well-known \emph{double chain}
(used by Hurtado, Noy, and Urrutia~\cite{hurtado_noy_urrutia} for giving their lower bound), a major 
building block in our reduction, in Section~\ref{sec:double_chain}.
Finally, in Section~\ref{sec_reduction}, we describe our reduction
and prove that it is correct.

\section{The Rectilinear Steiner Arborescence Problem}
\label{sec:RSA}
Let $S$ be a set of~$N$ points in the plane whose coordinates are nonnegative integers. 
The points in~$S$ are called \emph{sinks}. 
A \emph{rectilinear tree} $A$ is a connected acyclic collection of horizontal 
and vertical line segments that intersect only at their endpoints.
The \emph{length} of~$A$ is the total length of all segments in~$A$ 
(cf.~\cite[p.~205]{hwang}).
The tree $A$ is a \emph{rectilinear Steiner tree} for $S$ if every sink in 
$S$ appears as an endpoint of a segment in~$A$. 
We call $A$ a \emph{rectilinear Steiner arborescence} (RSA) for $S$ if (i) $A$ is 
rooted at the origin; (ii) every leaf of~$A$ lies at a sink in~$S$; and (iii) for 
each $s = (x_s,y_s) \in S$, the length of the path in~$A$ from the origin to $s$ 
equals $x_s+y_s$, i.e., all edges in~$A$ point north or east, as seen from the origin~\cite{rao}.
In the problem 
\textsc{Rectilinear Steiner Arborescence}, we are given a set of sinks $S$ and an integer~$k$. The question
is whether there is an RSA for $S$ of length at most $k$. 
Shi and Su showed that \textsc{Rectilinear Steiner Arborescence} is strongly 
NP-complete; in particular, it 
remains NP-complete if $S$ is contained in an $n \times n$ grid, with $n$ 
polynomially bounded in~$N$, the number of
sinks~\cite{shi_su}.\footnote{Although a  polynomial-time algorithm was 
claimed~\cite{trubin}, it has later been shown to be incorrect~\cite{rao}.}

We will need the following important structural property of the RSA.
Let $A$ be an RSA for a set $S$ of sinks. 
Let $e$ be a vertical segment in~$A$ that does not contain a sink. 
Suppose there is a horizontal segment $f$ incident to the upper endpoint $a$ of~$e$.
Since $A$ is an arborescence, $a$ is the left endpoint of~$f$.
Suppose further that $a$ is not the lower endpoint of another vertical edge.
Take a copy $e'$ of~$e$ and translate it to the right until $e'$ hits a sink 
or another segment endpoint (this will certainly happen at the right
endpoint of~$f$); see \figurename~\ref{fig_arborescence_slide}.
The segments $e$ and $e'$ define a rectangle~$R$.
The upper and left side of~$R$ are completely covered by $e$ and (a part of) $f$.
Since $a$ has only two incident segments, every sink-root path in~$A$ that goes through 
$e$ or $f$ contains these two sides of~$R$, entering the boundary of~$R$ 
at the upper right corner $d$ and leaving it at the lower left corner~$b$.
We reroute every such path at $d$ to continue clockwise along the boundary of 
$R$ until it meets $A$ again (this certainly happens at~$b$), and we 
delete $e$ and the part of~$f$ on $R$.
In the resulting tree we subsequently remove all unnecessary 
segments (this happens if there are no more root-sink paths through~$b$) 
to obtain another RSA $A'$ for~$S$.
Then $A'$ is not longer than~$A$.
This operation is called \emph{sliding $e$ to the right}.
If similar conditions apply to a horizontal edge, we can \emph{slide it upwards}.
The \emph{Hanan grid} for a point set is the set of all vertical and horizontal 
lines through its points.
Through repeated segment slides in a shortest RSA, one
can obtain the following theorem.
\begin{theorem}[\cite{rao}]\label{thm_slide}
Let $S$ be a set of sinks. There is a minimum-length RSA $A$ for
$S$ such that
all segments of~$A$ are on the Hanan
grid for $S \cup \{(0,0)\}$.\noproof
\end{theorem}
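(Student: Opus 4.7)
The plan is to begin with any minimum-length RSA $A$ and iteratively apply the slide operation (as described just before the theorem, together with its symmetric variant for horizontal segments) until every segment lies on the Hanan grid. Since a slide is stated to produce an RSA of length no greater than the original, the resulting tree remains minimum-length. To turn this into a clean argument I would, among all minimum-length RSAs for $S$, fix one $A$ that also minimizes a secondary potential $\Phi(A)$ measuring how far $A$ is from the Hanan grid; for example, $\Phi(A)$ can be the lexicographic pair consisting of the number of maximal segments of $A$ whose supporting line is not a Hanan line, followed by the total horizontal (resp.\ vertical) distance from such vertical (resp.\ horizontal) segments to the nearest Hanan line in the sliding direction.

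Assume for contradiction that $\Phi(A)>0$, so some maximal segment of $A$ is off-grid; without loss of generality let it be a vertical segment $e$ at $x$-coordinate $x_0$ where $x_0$ is not the $x$-coordinate of any sink or of the origin. Then $e$ contains no sink, since every sink has $x$-coordinate different from $x_0$. Let $a$ be the upper endpoint of $e$. By maximality of $e$, no vertical edge of $A$ extends upward from $a$, so in particular $a$ is not the lower endpoint of another vertical edge. Moreover, $a$ is not a sink (same reason as for the interior of $e$) and not the root, so in the arborescence $a$ must have at least one outgoing child edge. Since no child is vertical (again by maximality of $e$) and $A$ is an arborescence with edges directed north and east, this child must be a horizontal segment $f$ incident to $a$ with $a$ as its left endpoint. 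Thus all preconditions for sliding $e$ to the right are met, and sliding produces a new RSA $A'$ with $|A'|\leq |A|$, hence $|A'|=|A|$ since $A$ is minimum-length.

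The main obstacle is showing that the slide strictly decreases $\Phi$, because the translated copy of $e$ stops either at a sink (whose $x$-coordinate is on the Hanan grid, which is fine) or at a segment endpoint of $A$, whose $x$-coordinate need not be on the grid. I would resolve this by always choosing $e$ among the off-grid vertical segments of $A$ whose $x$-coordinate is smallest; then the slide either lands $e$ on a Hanan line or merges it with segments at a strictly larger $x$-coordinate, in both cases strictly decreasing $\Phi$ in the chosen lexicographic order. Applying the symmetric argument to horizontal segments (sliding upward) then eliminates all off-grid horizontal segments as well, and since the set of $x$- and $y$-coordinates encountered throughout the process is finite, the iteration terminates at a minimum-length RSA lying entirely on the Hanan grid, contradicting the extremal choice of $A$ and establishing the theorem.
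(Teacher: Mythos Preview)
The paper does not give its own proof of this theorem; it is cited from~\cite{rao} with only the remark that it ``can be proved constructively by repeated segment slides in a shortest RSA.'' Your proposal is exactly an attempt to carry out this hint, so the approach matches what the paper intends.

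Two technical points in your termination argument need more care. First, the second lexicographic component of~$\Phi$ (distance to the nearest Hanan line in the sliding direction) is not guaranteed to decrease: the translated copy~$e'$ stops at the first sink or segment endpoint \emph{within the $y$-range of~$e$}, so it may pass several vertical Hanan lines (through sinks outside that $y$-range) without stopping, and it need not merge with any existing vertical segment---the stopping endpoint may belong only to a horizontal segment, for instance the right end of~$f$. A potential that does work is $\sum_e (M - x_e)$ taken over all off-grid maximal vertical segments~$e$ (for a sufficiently large constant~$M$): a slide removes~$e$ and adds at most one new vertical segment at a strictly larger $x$-coordinate, so the sum strictly drops. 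Second, you should argue why the horizontal phase does not reintroduce off-grid vertical segments. This holds because the new vertical edge created by an upward slide of a maximal horizontal segment~$g$ lies at the $x$-coordinate of the left endpoint of~$g$, and by maximality of~$g$ the incoming edge at that point must be vertical (otherwise~$g$ would extend further left), so this $x$-coordinate coincides with that of an existing vertical segment, which after the first phase already lies on the Hanan grid. With these adjustments your argument goes through.
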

\begin{figure}
\centering
\includegraphics[width=\textwidth]{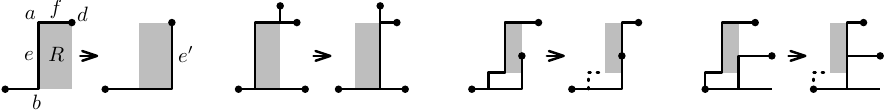}
\caption{The slide operation. The dots depict sinks; the 
rectangle~$R$ is drawn gray.
The dotted segments are deleted, since they do no longer lead to a sink.
}
\label{fig_arborescence_slide}
\end{figure}

We use a restricted version of \textsc{Rectilinear Steiner Arborescence}, 
called YRSA.
An instance $(S, k)$ of YRSA differs from an instance for \textsc{Rectilinear
  Steiner Arborescence}
in that we require that no two sinks in~$S$ have the 
same $y$-coordinate.
\ShoLong{The NP-hardness of YRSA follows by a simple perturbation argument; 
see the full version for all omitted proofs.}
{}

\begin{theorem}\label{thm_yrsa}
\textup{YRSA} is strongly \textup{NP}-complete.
\end{theorem}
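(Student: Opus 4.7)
The plan is to reduce the ordinary RSA problem to YRSA by scaling all sink coordinates by a polynomial factor $M$ and then perturbing the $y$-coordinates by distinct small integers chosen to be \emph{order-preserving}. Given an RSA instance $(S,k)$ with $N$ sinks and $S \subset [0,n]^2$, I take $M$ to be any integer larger than $\binom{N}{2}$, say $M=N^2$, and let $\sigma\colon\{1,\ldots,N\}\to\{0,\ldots,N-1\}$ be the bijection that assigns each sink its rank in the lexicographic order of $(y_i,i)$. Set $S' = \{(Mx_i, My_i + \sigma(i)) : 1 \le i \le N\}$ and $k' = Mk + \sum_i \sigma(i)$. The $y$-coordinates of $S'$ are distinct since $\sigma(i) < M$, so $(S',k')$ is a valid YRSA instance; all new coordinates are $O(Mn)$, so the reduction is polynomial and YRSA inherits membership in NP from RSA.

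For the forward direction I would establish $\mathrm{OPT}(S') \le M \cdot \mathrm{OPT}(S) + \sum_i \sigma(i)$. Starting from an optimal RSA $A$ for $S$, scaling by $M$ yields an RSA $A^M$ for $S^M = \{(Mx_i, My_i)\}$ of length $M \cdot \mathrm{OPT}(S)$. For each $i$, I include the point $(Mx_i, My_i + \sigma(i)) \in S'$ in the tree: if $(Mx_i,My_i)$ already has a vertical out-edge in $A^M$, then the new point lies strictly inside that edge (because $\sigma(i)<M$ while the next Hanan-grid height above is at least $M(y_i+1)$) and can be inserted by subdivision at no cost; otherwise I attach a vertical branch of length $\sigma(i)$. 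The result is a valid RSA for $S'$ of length at most $M\cdot\mathrm{OPT}(S) + \sum_i\sigma(i) = k'$.

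The harder direction is the matching lower bound $\mathrm{OPT}(S') \ge M\cdot\mathrm{OPT}(S)$. Take any RSA $A'$ for $S'$; by Theorem~\ref{thm_slide}, we may assume $A'$ lies on the Hanan grid of $S'\cup\{(0,0)\}$. Apply the rounding map $\phi(a,c)=(a, M\lfloor c/M\rfloor)$ to every vertex: the origin is fixed and each sink $(Mx_i,My_i+\sigma(i))$ is mapped to $(Mx_i,My_i)\in S^M$. Horizontal edges keep their length, while a vertical edge from $(a,c_1)$ to $(a,c_2)$ with $c_1<c_2$ has its length change by $-\bigl(\rho(c_2)-\rho(c_1)\bigr)$, where $\rho(c)=c-M\lfloor c/M\rfloor$ denotes the remainder (so $\rho(0)=0$ and $\rho(My_i+\sigma(i))=\sigma(i)$). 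The key observation is that the order-preserving choice of $\sigma$ forces $\rho(c_1)\le\rho(c_2)$ for any two Hanan-grid $y$-coordinates $c_1<c_2$, so no edge grows and $\mathrm{len}(\phi(A'))\le\mathrm{len}(A')$. The image $\phi(A')$ is a connected network on the Hanan grid of $S^M\cup\{(0,0)\}$ whose edges are all oriented north or east and whose vertex set includes the origin and every point of $S^M$, so extracting a spanning north/east arborescence and pruning non-sink leaves yields an RSA for $S^M$ of length at most $\mathrm{len}(A')$. Since every optimal RSA for $S^M = M\cdot S$ can by Theorem~\ref{thm_slide} be taken on a grid all of whose coordinates are multiples of $M$, scaling down yields $\mathrm{OPT}(S^M) = M\cdot\mathrm{OPT}(S)$, finishing the claim.

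Putting the two bounds together, $\mathrm{OPT}(S) > k$ forces $\mathrm{OPT}(S') \ge M(k+1) > Mk + \sum_i \sigma(i) = k'$ because $M > \binom{N}{2} \ge \sum_i \sigma(i)$, so the reduction is correct. The main obstacle is controlling the length of $\phi(A')$ in the backward direction: for an arbitrary perturbation, a vertical edge of $A'$ spanning two perturbed levels could become strictly longer after rounding if the upper level has a smaller perturbation; the order-preserving choice of $\sigma$ is precisely what eliminates this bad case.
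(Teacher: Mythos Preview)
Your proof is correct and follows the same overall strategy as the paper: scale the coordinates by a polynomial factor $M$, perturb the $y$-coordinates by small distinct integers to make them pairwise distinct, and then argue that the optimal RSA lengths for $S$ and $S'$ are tied together tightly enough for a single threshold $k'$ to separate yes- from no-instances. The paper uses the arbitrary perturbation $s'_i = (N^4 x_i,\, N^4 y_i + i)$ and, in the backward direction, does \emph{not} try to show that rounding is non-increasing; instead it crudely bounds how much the rounded-down arborescence can \emph{grow} (by at most $N^3$, since a Hanan-grid arborescence decomposes into $N$ monotone paths with at most $N$ vertical pieces each, each stretched by at most $N$), and absorbs this slack by taking the larger factor $M=N^4$. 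Your order-preserving choice of $\sigma$ is a genuine refinement: it forces $\rho(c_1)\le\rho(c_2)$ whenever $c_1<c_2$ on the Hanan grid, so rounding never lengthens any edge and you obtain the clean inequality $\mathrm{OPT}(S')\ge M\cdot\mathrm{OPT}(S)$ directly, which is why the smaller factor $M=N^2$ suffices. Both arguments are valid; yours is tighter, while the paper's is more robust in that it needs no structural property of the perturbation at all.
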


\begin{proof}
Due to Theorem~\ref{thm_slide}, YRSA and 
\textsc{Rectilinear Steiner Arborescence} 
are in NP~\cite{shi_su}.
We now show how to transform an instance
$(S,k)$ of \textsc{Rectilinear Steiner Arborescence} to an
instance of YRSA. 
We may assume that $N = |S| \geq 3$, and we number the sinks as $S = \langle s_1, s_2, \dots, s_N \rangle$ in an arbitrary fashion.
For $i = 1, \dots, N$, let $(x_i, y_i)$ be the coordinates of~$s_i$ and define $s'_i := (x_iN^4, y_iN^4 + i)$.
We set $S' := \{s'_1, s'_2, \dots, s'_N\}$.
The $y$-coordinates of the sinks in~$S'$ are pairwise distinct.
We will show that there is an RSA for $S$ of length at most $k$
if and only if there is an RSA for $S'$ of length at most
$kN^4 + N^3$. 

Let $A$ be a rectilinear Steiner arborescence for $S$ of length at most $k$.
We scale $A$ by $N^4$ and draw a vertical segment from each leaf to the sink in $S'$
above it.
This gives an RSA for $S'$ of length at most $k N^4  + N^2 < k N^4 + N^3$.

Conversely, let $A'$ be an RSA for $S'$ of length at most $kN^4 + N^3$.
Due to Theorem~\ref{thm_slide}, we can assume that $A'$ is on the Hanan grid.
We round the $y$-coordinate of every segment endpoint in 
$A'$ down to the next multiple of $N^4$ (possibly removing segments of length 0).
The resulting drawing remains connected; every path to the origin 
remains monotone; and since the segments of $A'$ lie on the
Hanan grid of $S' \cup \{(0,0)\}$, no new cycles are introduced. 
Thus, the resulting drawing constitutes an arborescence $A''$ for 
the set $S''$ of sinks obtained by scaling $S$ by $N^4$.
Since $A'$ lies on the Hanan grid, it is a union of~$N$ paths, each with
at most $N$ vertical segments. 
The rounding operation increases the length of
each such vertical segment by at most $N$.
Thus, the total length of $A''$ is at most
$k N^4 + 2N^3$. By Theorem~\ref{thm_slide}
there exists an optimum arborescence $A^*$ for
$S''$ that lies on the Hanan grid. The
length of $A^*$ is a multiple of $N^4$, and thus
at most $kN^4$, since $2N^3 < N^4$ for $N \geq 3$.
 It follows that $S$ has an RSA of length at most $k$.

Therefore, $(S, k)$ is a yes-instance for \textsc{Rectilinear Steiner
Arborescence} if and only if $(S', kN^4 + N^3)$ is a yes-instance for
YRSA. Since $(S', kN^4 + N^3)$ can be computed in polynomial time 
from $(S, k)$, and since the coordinates in
$S'$ are polynomially bounded in the coordinates of~$S$, it follows that 
YRSA is strongly NP-complete.
\qedopt
\end{proof}

Due to Theorem~\ref{thm_slide}, we get the following technical corollary, which will be useful later.

\begin{corollary}\label{cor_blow_up}
YRSA remains strongly NP-complete even if the sinks have coordinates that are a multiple of a positive integer whose value is polynomial in~$N$.
\end{corollary}

\section{Double Chains}\label{sec:double_chain}
Our definitions (and illustrations) follow~\cite{point_set_hard}.  A \emph{double
chain} $D$ is a polygon
that consists of two chains, an \emph{upper chain} and a \emph{lower chain}.
There are $h$ vertices on each chain, $\langle u_0, \dots, u_{h-1} \rangle$ on the
upper chain and $\langle l_0, \dots, l_{h-1} \rangle$ on the lower chain,
both numbered from left to right, and $D$ is defined by $\langle l_0, \dots, l_{h-1}, u_{h-1}, \dots, u_0 \rangle$.
Any point on one chain sees every point on the other chain, and any quadrilateral formed by three vertices of one chain and one vertex of the other chain is non-convex;
see \figurename~\ref{fig_dc_triangulations_wedges}~(left).
We call the triangulation $T_u$ of~$D$ where $u_0$ has maximum degree the
\emph{upper extreme triangulation}; observe that this triangulation is unique.  
The triangulation $T_l$ of~$D$ where $l_0$ has maximum degree
is called the \emph{lower extreme triangulation}.  The two extreme triangulations
are used to show that the diameter of the flip graph is quadratic; see \figurename~\ref{fig_dc_triangulations_wedges}~(right).

\begin{theorem}[Hurtado, Noy, Urrutia~\cite{hurtado_noy_urrutia}]\label{thm_dc}
The flip distance between $T_u$ and $T_l$ is~\mbox{$(h-1)^2$}.\noproof
\end{theorem}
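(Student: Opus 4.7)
My plan is to exploit the rigid combinatorial structure that property~2 imposes on triangulations of $P_D$. First, property~2 forces each chain to be strictly convex as viewed from the other chain, so any chord joining two non-adjacent vertices of the same chain lies outside $P_D$; hence every diagonal of a triangulation of $P_D$ is a transversal $l_i u_j$. Reading the boundary in the cyclic order $l_1, \ldots, l_n, u_n, \ldots, u_1$, a direct check shows that two such diagonals $l_i u_j$ and $l_{i'} u_{j'}$ are non-crossing precisely when the pairs $(i, j)$ and $(i', j')$ are comparable under the coordinate-wise order on $\{1, \ldots, n\}^2$.

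Next I analyse a flip. Its quadrilateral must be convex, so by property~2 it has two vertices on each chain; and since within-chain chords are invalid, both of its chain-edges must be polygon edges. Therefore every flip quadrilateral is of the form $l_i l_{i+1} u_{k+1} u_k$, and the flip swaps the diagonal $l_i u_{k+1}$ with $l_{i+1} u_k$; in particular every flip preserves the endpoint-index sum $i + j$ of the diagonal it affects. Since any two distinct diagonals with the same sum must cross and a triangulation has exactly $2n - 3$ diagonals, each $s \in \{3, 4, \ldots, 2n - 1\}$ is realised by exactly one diagonal in every triangulation. A triangulation is thus encoded by a monotone lattice staircase whose entries $(i_s, j_s)$, one for each $s \in \{3, \ldots, 2n-1\}$, satisfy $i_s + j_s = s$ and have consecutive entries differing by $(1, 0)$ or $(0, 1)$. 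In this encoding, $T_u$ walks right from $(2, 1)$ along $j = 1$ to $(n, 1)$ and then up along $i = n$ to $(n, n-1)$, while $T_l$ walks up from $(1, 2)$ along $i = 1$ to $(1, n)$ and then right along $j = n$ to $(n-1, n)$.

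For the lower bound, each flip shifts exactly one staircase entry by one unit along its anti-diagonal, so any flip sequence from $T_u$ to $T_l$ has length at least
\[
\sum_{s=3}^{2n-1} \bigl|\,i_s^{T_u} - i_s^{T_l}\bigr| \;=\; \sum_{t=1}^{n-1} t \;+\; \sum_{t=1}^{n-2} t \;=\; (n-1)^2.
\]
For the matching upper bound I construct an explicit schedule of $(n-1)^2$ flips transforming the $T_u$-staircase into the $T_l$-staircase one anti-diagonal step at a time; the main technical obstacle will be to order the flips so that at every intermediate configuration the two staircase neighbours of the entry being flipped sit at the positions needed to make the flip quadrilateral available. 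I expect a short induction on $n$ to suffice, with the first $2n-3$ flips reducing to the analogous problem on a double chain with $n-1$ vertices per side, giving a total of $(2n-3) + (n-2)^2 = (n-1)^2$ flips.
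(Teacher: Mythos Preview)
The paper itself does not prove Theorem~\ref{thm_dc}; it is quoted from Hurtado, Noy, and Urrutia with a \texttt{\textbackslash noproof} marker. The only place the paper indicates the original argument is inside the proof of Lemma~\ref{lem_empty_wedge}: one fixes a line~$\ell$ separating the two chains, considers the $n-1$ \emph{anchored} triangles on each chain (triangles with a side on that chain), and observes that an upper-anchored and a lower-anchored triangle can swap their relative order along~$\ell$ only when they share an edge and that edge is flipped. Since all $(n-1)^2$ upper/lower pairs are oppositely ordered in $T_u$ and $T_l$, at least $(n-1)^2$ flips are needed.

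Your proof is correct and takes a genuinely different route. Rather than counting inversions of triangles, you exploit that the index sum $i+j$ of a diagonal $l_iu_j$ is a flip invariant, that every triangulation of $P_D$ contains exactly one diagonal of each sum $s\in\{3,\dots,2n-1\}$, and that a flip moves a single lower index $i_s$ by~$\pm 1$. The lower bound is then the $L^1$ distance $\sum_s|i_s^{T_u}-i_s^{T_l}|=(n-1)^2$. This staircase picture also makes the upper bound essentially automatic: the two monotone lattice paths bound a region of area $(n-1)^2$, and one can sweep it box by box (each box removal is a legal flip at a right--up corner). What your encoding buys is a clean per-diagonal potential function and a direct link to Young-diagram combinatorics; what the anchored-triangle argument buys---and what the paper actually needs later---is robustness: the inversion count survives when $P_D$ is embedded inside a larger polygon and extra vertices appear (Lemmas~\ref{lem_lower_bound} and~\ref{lem_empty_wedge}), whereas your sum invariant is specific to $P_D$ itself.

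One small point on your upper bound: the arithmetic $(2n-3)+(n-2)^2=(n-1)^2$ is fine, but you should spell out which $2n-3$ flips you perform first and verify that the resulting configuration really is the extreme triangulation of a sub-double-chain on $n-1$ vertices per side with the remainder already matching $T_l$. The area-sweep description above (or equivalently: repeatedly flip the unique right--up corner furthest from the diagonal) gives a cleaner explicit schedule than the induction and avoids this bookkeeping.
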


Through a slight modification of~$D$, we can make the flip distance between the upper and the lower extreme triangulation linear.
This will enable us in our reduction to impose a certain structure on short flip sequences.
To describe this modification, we first define the \emph{flip-kernel} of a double chain.

\begin{figure}
\centering
\includegraphics[width=\textwidth]{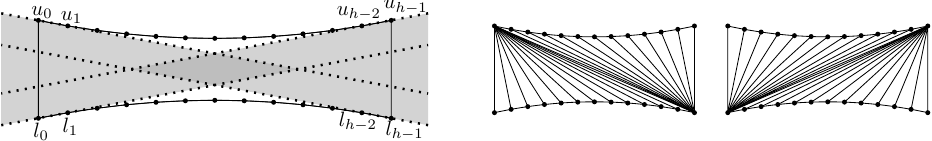}
\caption{
Left: The polygon and the hourglass (gray) of a double chain. The diamond-shaped
flip-kernel can be extended arbitrarily by flattening the chains.
Right: The upper extreme triangulation $T_u$ and the lower extreme triangulation $T_l$. 
}
\label{fig_dc_triangulations_wedges}
\end{figure}

Let $W_1$ be the wedge defined by the lines through $u_0 u_1$ and
$l_0 l_1$ whose interior contains no vertex of~$D$ but intersects the segment $u_0 l_0$.
Define $W_h$ analogously by the lines through $u_{h-1} u_{h-2}$ and $l_{h-1} l_{h-2}$.
We call $W := W_1 \cup W_h$ the \emph{hourglass of~$D$}.
The unbounded set $W \cup D$ is defined by four rays and the two chains.
The \emph{flip-kernel} of~$D$ is the intersection of the four closed half-planes below the lines through $u_0 u_1$ and $u_{h-2} u_{h-1}$ and above the lines through $l_0 l_1$ and $l_{h-2} l_{h-1}$.%
\footnote{The flip-kernel of~$D$ might not be completely inside the polygon~$D$.
This is in contrast to the ``visibility kernel'' of a polygon.}

\begin{figure}
\centering
\includegraphics[width=\textwidth]{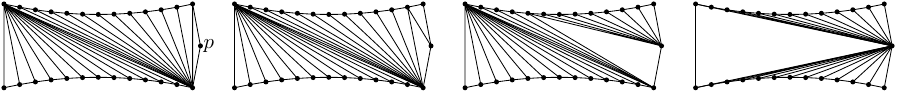}
\caption{The extra point~$p$ in the flip-kernel of~$D$ allows flipping one extreme triangulation
of~$Q$ to the other in $4h-4$ flips.}
\label{fig_dc_steiner}
\end{figure}

\begin{definition}\label{def_p_d_plus}
\sloppypar{
Let $D$ be a double chain and let $p$ be a point in the flip-kernel of~$D$ to the right
of the directed line $l_{h-1} u_{h-1}$.
The polygon given by the sequence $\langle l_0, \dots, l_{h-1}, p, u_{h-1}, \dots, u_0 \rangle$ is called a \emph{double chain extended by~$p$}.
The upper and the lower extreme triangulation of such a polygon contain the edge $u_{h-1} l_{h-1}$ as a diagonal and are otherwise defined in the same way as for~$D$.
}
\end{definition}

The flip distance between the two extreme
triangulations of $D$ extended by a point~$p$ is much
smaller than for $D$~\cite{problemas}.
\figurename~\ref{fig_dc_steiner} shows how to 
transform them into each other with $4h-4$ flips.
The next lemma shows that this is optimal, even for more general polygons.
The lemma is a slight generalization of a lemma by Lubiw and Pathak~\cite{lubiw} on double chains of constant size.

\begin{lemma}\label{lem_lower_bound}
\begin{sloppypar}
Suppose that $h \geq 5$ and
consider a polygon that contains $D$ and has $\langle l_0, \dots, l_{h-1} \rangle$ and $\langle u_{h-1},\dots, u_0\rangle$ as part of 
its boundary.
Let $T_1$ and $T_2$ be two triangulations that contain the upper
extreme triangulation and the lower extreme triangulation of~$D$ as a
sub-triangulation, respectively.  Then $T_1$ and $T_2$ have flip distance at
least $4h-4$.
\end{sloppypar}
\end{lemma}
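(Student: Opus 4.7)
The plan is to lower-bound the length of any flip sequence $\sigma=(T_1=S_0,S_1,\ldots,S_m=T_2)$ by combining a degree count at $u_1$ and $l_1$ with a geometric case analysis.  For a vertex $v$ and a triangulation $T$, let $d_v(T)$ denote the degree of $v$ in $T$ restricted to edges with both endpoints in $P_D$.  Since $T_1$ contains the $u_1$-fan $T_u$, one has $d_{u_1}(T_1)=2n-1$ and $d_{l_1}(T_1)=2$; in $T_2$ these values are swapped.  Each flip changes the degree of any fixed vertex by at most one, because the two diagonals involved in a flip share no endpoint.  Hence $\sigma$ contains at least $2n-3$ flips whose removed or added diagonal is incident to $u_1$, and symmetrically $2n-3$ flips incident to $l_1$.

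A naive inclusion--exclusion gives only $m\ge 4n-6$, because a flip whose quadrilateral has both $u_1$ and $l_1$ as vertices (a \emph{cross-flip}) is counted twice.  To improve this to $4n-4$, I would show that cross-flips are structurally restricted.  A cross-flip flips $u_1x$ to $l_1y$ inside a convex quadrilateral $u_1l_1xy$, which forces the triangle $\triangle u_1l_1x$ (together with $\triangle u_1xy$) to be present just before the flip.  In $T_u$ the only such triangle is $\triangle u_1l_1l_2$, and its outgoing quadrilateral $u_1l_1l_2l_3$ is non-convex by the double-chain property (three vertices on one chain plus one on the other); the symmetric situation occurs at $T_l$ with $\triangle u_1l_1u_2$.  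Thus every cross-flip has to be enabled by at least one auxiliary preparatory flip that creates a convex triangle $\triangle u_1l_1x$ for some new interior $x$, and a charging scheme assigns those preparatory flips to the cross-flips that use them, recovering the two missing units.

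A cleaner alternative I would also try is a case distinction on whether the closing diagonals $u_1l_1$ and $u_nl_n$ of $P_D$ (which lie in both $T_1$ and $T_2$ as edges of $T_u$ and $T_l$, and are therefore diagonals of $T_1,T_2$ whenever $P\supsetneq P_D$) remain in the triangulation throughout $\sigma$.  If they do, no flip crosses the boundary of $P_D$, so $\sigma$ decomposes into independent flip sequences inside $P_D$ and inside $P\setminus P_D$; the $P_D$-part alone must transform $T_u$ into $T_l$, and Theorem~\ref{thm_dc} supplies at least $(n-1)^2\ge 4n-4$ flips as soon as $n\ge 5$.  If instead one of these closing diagonals leaves the triangulation at some step, it must later be restored (since it belongs to both $T_1$ and $T_2$), contributing overhead flips that, combined with the degree-based count, again yield $4n-4$.

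The hardest part, as I expect, is to make the cross-flip charging scheme rigorous in intermediate triangulations where the closing diagonals are temporarily absent: in such stages triangles may span $P_D$ and $P\setminus P_D$, and one must verify carefully that those flips cannot short-cut the degree change inside $P_D$ beyond what the charging scheme allows.
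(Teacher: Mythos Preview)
Your degree count is wrong, and this undermines the whole approach.  The upper extreme triangulation $T_u$ is \emph{not} a full fan from $u_1$: since the upper chain is reflex as seen from inside $P_D$, the vertex $u_1$ cannot see $u_3,\dots,u_n$.  In $T_u$ the vertex $u_1$ is adjacent to $u_2$ and to $l_1,\dots,l_n$, while the remaining region $u_1u_2\cdots u_n l_n$ is a fan from $l_n$.  Thus $d_{u_1}(T_1)=n+1$, not $2n-1$, and the degree drop at $u_1$ is only $n-1$.  The same holds at $l_1$.  Your inclusion--exclusion then gives $m\ge 2n-2$ (or $2n-4$ after subtracting cross-flips), which is a factor of two short of the target $4n-4$; the charging scheme you sketch for cross-flips cannot bridge that gap, because the deficit is global, not caused by double counting.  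You could try to bring in $u_n$ and $l_n$ as well (each also undergoes a degree change of $n-1$), but then every chain-to-chain flip touches two of the four distinguished vertices, and the overlap analysis becomes the whole problem rather than a small correction.

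The paper's argument is different in kind: it tracks the \emph{apex} of each of the $2(n-1)$ triangles that sit on an edge of the upper or lower chain.  Each such apex has to migrate from one end of the opposite chain to the other.  A flip that stays between the two chains advances two apices by one step each; a flip that sends an apex to a vertex outside $D$ (and the matching flip bringing it back) moves one apex and can be charged $2$ to that triangle.  For $n>5$ the latter is the cheaper option, so every one of the $2(n-1)$ anchored triangles costs at least $2$, giving $4n-4$.  The key object is the anchored triangle, not the degree of a single vertex; this is what lets the bound scale with $2(n-1)$ rather than with the degree change at a corner.

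Your alternative branch (closing diagonals stay in place throughout) is fine and indeed reduces to Theorem~\ref{thm_dc}, but the ``otherwise'' branch inherits the same deficient degree count and does not close.
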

\begin{proof}
We slightly generalize a proof by Lubiw and Pathak~\cite{lubiw} for
double chains of constant size.

Let $C_u$ be the upper chain and $C_l$ be the lower
chain of $D$.
The triangulation $T_1$ has $2(h-1)$ triangles
with an edge on $C_u$ or on $C_l$.
These triangles are called \emph{anchored},
and the vertex not incident to the edge on $C_u$ or on $C_l$ 
is called the \emph{apex}.
For each anchored triangle with an edge on $C_u$,
the apex must move from $l_{h-1}$ to~$l_0$, and similarly for $C_l$.
We distinguish three
types of flips depending on whether the convex quadrilateral whose diagonal
is flipped has (1) four; (2) three; or
(3) at most two vertices on $D$.
A flip of type (1) moves the apex
of two anchored triangles by one; a flip of type (2) moves the apex of one 
anchored triangle from
$D$ to a point outside $D$ or back again; and a flip of type (3)
does not move any apex of an anchored triangle along $D$ or between a vertex 
of~$D$ and a vertex not in~$D$.

We say that an anchored triangle is of type (1) if its apex is moved
only by flips of type (1). It is of type (2) if its apex
is moved by at least one flip of type (2). Every anchored triangle
is either of type (1) or of type (2).
A type (1) triangle must be involved
in at least $h-1$ flips of type (1), and each of these flips
can affect at most one other type (1) triangle.
A type (2) triangle must be involved in at least $2$
flips of type (2), and each of these flips can affect
no other anchored triangle.
Thus, if we have $m_1$ type (1) triangles and $m_2$ type (2)
triangles, we need at least
$(h-1)m_1/2 + 2m_2$ flips.
For $h \geq 5$, we have
$(h-1)m_1/2 + 2m_2 \geq 2(m_1 + m_2) = 4h -4$,
as claimed.
\qedopt
\end{proof}

The following result can be seen as a special case of~\cite[Proposition~1]{point_set_hard}.

\begin{lemma}\label{lem_empty_wedge}
\begin{sloppypar}
Consider a polygon that contains~$D$ and has $\langle u_{h-1}, \dots, u_0, l_0, \dots, l_{h-1} \rangle$ as part of 
its boundary. Let $T_1$ and $T_2$ be two triangulations that contain the upper and 
the lower extreme triangulation of~$D$ as a sub-triangulation, respectively.
Let $\sigma$ be a flip sequence from $T_1$ to $T_2$ such that there is no triangulation in~$\sigma$ containing a triangle with one vertex at the upper chain, the other vertex at the lower chain, and the third vertex at a point in the interior of the hourglass of~$D$.
Then $|\sigma| \geq (h-1)^2$.
\end{sloppypar}
\end{lemma}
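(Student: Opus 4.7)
The plan is to extend the apex-counting argument used in the proof of Lemma~\ref{lem_lower_bound} (originally due to Hurtado, Noy, and Urrutia for Theorem~\ref{thm_dc}) and use the hypothesis to rule out the ``shortcut'' flips that drove the bound down to $4n-4$ in Lemma~\ref{lem_lower_bound}. For any triangulation $T$ containing an extreme sub-triangulation of $P_D$, define the \emph{apex} of each of the $2(n-1)$ chain boundary edges of $D$ in $T$ as the third vertex of the triangle of $T$ incident to that edge. In $T_1$ the apex of every upper-chain edge is $l_n$ and the apex of every lower-chain edge is $u_n$; in $T_2$ these apices become $l_1$ and $u_1$, respectively. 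Hence the $2(n-1)$ apices must collectively traverse at least $2(n-1)^2$ steps along the chains during $\sigma$.

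I then classify the flips in $\sigma$ exactly as in the proof of Lemma~\ref{lem_lower_bound}: a type~1 flip exchanges one cross diagonal between the chains for another, with all four involved vertices lying in $D$; a type~2 flip exchanges a cross diagonal for a diagonal incident to a vertex outside $D$ (three of four vertices in $D$); a type~3 flip has fewer than three of the four involved vertices in $D$. Type~3 flips do not contribute to any apex, and each type~1 flip advances at most two apices by exactly one step each along a chain. Hence, once type~2 flips are excluded, the inequality $2(n-1)^2 \le 2\,|\sigma|$ immediately gives $|\sigma| \ge (n-1)^2$.

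The heart of the argument is to use the hypothesis to suppress type~2 flips. Such a flip acts on a convex quadrilateral with vertices $u_i, l_j, w, v$, where $u_i$ is on the upper chain, $l_j$ is on the lower chain, $w \in D$, and $v \in P \setminus D$; its two diagonals are the cross edge $u_i l_j$ and the edge $wv$. Whichever diagonal is present before the flip, one of the two triangles of the quadrilateral is $\{u_i, l_j, v\}$, and this triangle appears in some triangulation of $\sigma$. I would then prove, by a visibility analysis, that under the given hypothesis on $P$ any vertex $v \in P \setminus D$ simultaneously visible from a vertex of each chain through the interior of $P$ must lie in the hourglass $W_1 \cup W_n$ of $P_D$. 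Since the two chains and the edge $u_1 l_1$ are all parts of the boundary of $P$, the only ``opening'' in the boundary of $P_D$ available to such sightlines is the segment $u_n l_n$, and the set of points from which both sightlines can pass through this opening without being blocked by the chain-end extensions is precisely the wedge $W_n$. Hence $v$ lies in the interior of $W_n$, and the triangle $\{u_i, l_j, v\}$ realizes the forbidden configuration, contradicting the hypothesis.

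The main obstacle will be the visibility claim: verifying that no vertex of $P$ outside the hourglass can be simultaneously visible through $P$ from vertices on both chains, using only the minimal assumption that $P \supseteq P_D$ with $\langle u_n, \dots, u_1, l_1, \dots, l_n \rangle$ on $\partial P$. Once this geometric fact is in place, the classical apex-counting argument immediately yields the desired bound.
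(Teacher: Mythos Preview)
Your approach is correct but takes a genuinely different route from the paper's proof. Both arguments end with a $(n-1)^2$ count on the $2(n-1)$ anchored (chain-edge) triangles, but they exploit the hourglass hypothesis differently.

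You use the hypothesis to forbid type~2 flips outright: such a flip lives in a convex quadrilateral $\{u_i,l_j,w,v\}$ with $v\notin D$, and whichever diagonal is present, the triangle $u_i l_j v$ occurs in some triangulation of~$\sigma$. Your visibility argument (sightlines to both chains must exit $P_D$ through $u_n l_n$ and avoid the chain edges $u_{n-1}u_n$, $l_{n-1}l_n$) then forces $v\in W_n$, contradicting the hypothesis. With type~2 flips gone, every apex stays on the opposite chain throughout~$\sigma$, and the signed apex-displacement count finishes the job. One phrasing caveat: a type~1 flip may move both apices \emph{backward} as well as forward; what makes the inequality $2(n-1)^2\le 2\,|\sigma|$ valid is that each flip contributes at most $+2$ to the \emph{net} signed displacement, which must equal $2(n-1)^2$.

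The paper does \emph{not} exclude apices leaving the opposite chain. Instead it shows that in every triangulation of~$\sigma$ there is a line~$\ell$ separating $C_u$ from $C_l$ along which all anchored triangles appear in a well-defined order. The hypothesis is invoked only to handle the case where some apex lies \emph{inside} the hourglass: then a ray argument produces a non-anchored triangle~$\Delta$ with one vertex on each chain, and since its third vertex must (by hypothesis) lie outside the hourglass, $\Delta$ is forced to use $u_n$ or $l_n$; this in turn pins all anchored triangles of one chain to have apices on the opposite chain, which suffices for the ordering along~$\ell$. The $(n-1)^2$ bound then comes from the Hurtado--Noy--Urrutia inversion count: each upper/lower anchored pair must swap along~$\ell$, and a swap requires flipping their common edge.

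Your route is more direct once the visibility claim is established, and it yields the stronger intermediate fact that apices never leave the opposite chain at all. The paper's ordering argument is more tolerant: it would still go through even if some apex were at a non-$D$ vertex outside the hourglass. One small point to tidy in your write-up: the hypothesis forbids only the \emph{interior} of the hourglass, so you should remark why $v$ cannot sit on the bounding lines of $W_n$ (either by a general-position assumption on~$P$, or by noting that a vertex on the line through $u_{n-1}u_n$ cannot see any $u_i$ with $i<n-1$ without the segment meeting the chain).
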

\begin{proof}
Our reasoning is similar to the proof of Lemma~\ref{lem_lower_bound}, 
see also~\cite{lubiw}. As before, let $C_u$ and $C_l$
be the upper and lower chain of~$D$, and call a
triangle with an edge on $C_u$ or on $C_l$ \emph{anchored},
the third vertex being the \emph{apex}.
Any triangulation of the given polygon has $2(h-1)$ anchored triangles.

We will argue that for each triangulation of~$\sigma$ there 
exists a line~$\ell$ that separates $C_u$ from $C_l$
and that intersects all anchored triangles.
This is clear if the apices of all anchored triangles
lie on the other chain or outside the hourglass.
Now consider a triangulation of the sequence~$\sigma$ where 
at least one anchored triangle has its apex at a 
vertex $v$ inside the hourglass.
Let $r$ be a ray that starts at a point on $u_0 l_0$ and 
passes through~$v$ such that the supporting line of~$r$ separates 
$C_u$ from $C_l$ (such a ray must exist 
since $v$ is inside the hourglass).
Then $r$ intersects at least one triangle that 
is not anchored, because the triangle whose interior is intersected 
by~$r$ before reaching $v$ cannot be anchored.
Let $\Delta$ be the first non-anchored triangle whose 
interior is intersected by~$r$.
Then $\Delta$ has one vertex on $C_u$
and one vertex on $C_l$. By assumption, the 
third vertex of~$\Delta$  cannot be inside the hourglass,
so it must lie outside.
This means that one of the vertices 
of~$\Delta$ has to be either $u_{h-1}$ or~$l_{h-1}$.
This  implies that either all anchored triangles at $C_u$ or
or all anchored triangles $C_l$, 
respectively, have their apex at the opposite chain.
Thus, also for this triangulation there exists a line~$\ell$ 
that separates $C_u$ from $C_l$ and that 
intersects all anchored triangles. Observe that every such line
intersects the anchored triangles in the same order.

Now we proceed similarly as in the proof 
of Hurtado, Noy, and Urrutia~\cite{hurtado_noy_urrutia}: 
we observe that an anchored triangle at $C_u$  and an anchored triangle 
at $C_l$ can change their relative position 
along~$\ell$ only if they have an edge in common and this edge is flipped.
This results in an overall number of $(h-1)^2$ flips.
\qedopt
\end{proof}

\section{The Reduction}\label{sec_reduction}
We reduce YRSA to \textsc{PolyFlip}.
Let $S$ be a set of~$N$ sinks.
By Corollary~\ref{cor_blow_up}, we can assume that the coordinates of sinks of~$S$ are multiples of a factor~$\beta = 2N$ in $\{0, \dots , \beta n \}$.
Further, we can restrict ourselves to YRSA instances of the form $(S, \beta k)$.
Thus, we imagine that the sinks are embedded on a $\beta n \times \beta n$ grid.
The reasons for the choice of~$\beta$ will become clear below.

We construct a polygon~$P$ and two triangulations $T_1$, $T_2$ in~$P$ such that a shortest flip sequence from $T_1$ to $T_2$ corresponds to a shortest RSA for $S$.
To this end, we will describe how to interpret any triangulation of~$P$ as a \emph{chain path}, a path in the integer grid that starts at the root and uses only edges that go north or east.
It will turn out that flips in~$P$ essentially correspond to moving the endpoint of the chain path along the grid.
We choose $P$, $T_1$, and $T_2$ in such a way that a shortest flip sequence between $T_1$ and $T_2$ moves the endpoint of the chain path according to an Eulerian traversal of a shortest RSA for~$S$.
To force the chain path to visit all sinks, we use the observations from Section~\ref{sec:double_chain}: the polygon $P$ contains a double chain for each sink, so that only for certain triangulations of~$P$ it is possible to flip the double chain quickly.
These triangulations will be exactly the triangulations that correspond to the chain path visiting
the appropriate sink.
To force the sinks to be visited, we, with foresight, fix the number of points in each of the two chains of a double chain representing a sink to $d = nN$ (recall that $n$ is polynomial in~$N$).

\subsection{The Construction}
\label{sec_construction}
\sloppypar{
We take a double chain~$D$ with $\beta n + 2$ vertices on each chain such that the flip-kernel of~$D$ extends to the right of $l_{\beta n+1} u_{\beta n+1}$.
We add a point $z$ to that part of the flip-kernel, and we let $Q$ be the polygon defined by $\langle l_0, \dots, l_{\beta n+1}, z, u_{\beta n+1}, \dots, u_0 \rangle$, i.e., a double chain extended by~$z$ (recall Definition~\ref{def_p_d_plus}).
}
Next, we add double chains to $Q$ in order to encode the sinks in $S$.
For each sink $s = (x_s,y_s)$, we remove the edge $l_{y_s} l_{y_s + 1}$, and we replace it by a (rotated) double chain $D_s$ with $d$ vertices on each chain, such that $l_{y_s}$ and $l_{y_s+1}$ become the last point on the lower and the upper chain of $D_s$, respectively.
We orient $D_s$ in such a way that $u_{x_s}$ is the only point inside the hourglass of $D_s$ and so that $u_{x_s}$ lies in the flip-kernel of $D_s$; see \figurename~\ref{fig_installing_sites}.
We refer to the added double chains as the \emph{sink gadgets}, and
we call the resulting polygon $P$.
Since the $y$-coordinates in~$S$ are pairwise distinct, there is at most one sink gadget per edge of the lower chain of~$Q$.
Since $\beta \geq 2$, no two sink gadgets are placed on neighboring edges of~$Q$, and can be constructed such that they do not overlap.
Hence, $P$ is a simple polygon.
The precise placement of the sink gadgets is flexible, so given an appropriate embedding of~$D$, we can make all coordinates integers whose value is polynomial in the input size;
see Appendix~\ref{apx_coordinates} for details.

\begin{figure}
\centering
\includegraphics[width=\textwidth]{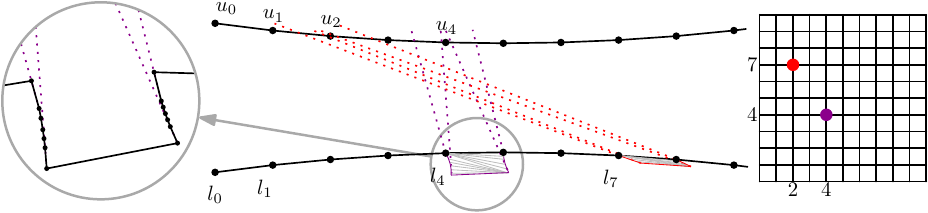}
\caption{
The sink gadget for a sink $(x_s,y_s)$ is obtained by replacing the edge
$l_{y_s} l_{y_s + 1}$ by a double chain with $d$ vertices on
each chain. The double chain is oriented such that $u_{x_s}$ is the
only point inside its hourglass and its flip-kernel. %
}
\label{fig_installing_sites}
\end{figure}

Next, we describe the source and target triangulation for $P$.
In the source triangulation $T_1$, the interior of~$Q$ is triangulated 
such that all edges are incident to~$z$. 
The sink gadgets are all triangulated with the upper extreme triangulation.
The target triangulation $T_2$ is similar, but now the sink gadgets 
are all triangulated with the lower extreme triangulation.

To get from $T_1$ to $T_2$, we must go from one extreme
triangulation to the other for each sink gadget $D_s$. 
By Lemma~\ref{lem_empty_wedge},
this requires $(d-1)^2$ flips, unless the flip sequence creates a
triangle that allows us to use the vertex in the flip-kernel of
$D_s$. In this case, we say that the flip sequence \emph{visits}
the sink $s$.
The main idea is that, since the value chosen for~$d$ is large, a shortest flip sequence must visit all sinks, 
and we  will show that this induces an RSA for~$S$ of comparable length.
Conversely, we will show how to derive a flip sequence from an 
RSA.
The precise statement is given in the following theorem.

\begin{theorem}\label{thm:YRSA<->FlipDist}
Let $N \geq 3$, and
set $\beta = 2N$.
Let $S$ be a set of $N$ sinks such that
the coordinates of
the sinks are multiples of $\beta$ in $\{0, \dots, \beta n\}$, where $n$ is polynomially bounded in~$N$.
Set $d = nN$ and let $P$ be the simple polygon and $T_1$ and $T_2$
the two triangulations of~$P$ as described above.
Then for any $k \geq 1$, the flip distance between $T_1$ and
$T_2$ w.r.t.~$P$ is at most $2\beta k + (4d-2)N$ if and only
if
$S$ has an RSA of length at most $\beta k$.
\end{theorem}

We will prove Theorem~\ref{thm:YRSA<->FlipDist} in the following
sections. But first, let us show how to use it for our NP-completeness
result.
\begin{theorem}
\textup{\textsc{PolyFlip}} is \textup{NP}-complete.
\end{theorem}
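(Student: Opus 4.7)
The plan is to combine the correctness of the reduction (Theorem~\ref{thm:YRSA<->FlipDist}) with the NP-completeness of YRSA (Theorem~\ref{thm_yrsa}) and to verify that the reduction is polynomial-time computable. First I would address membership: given a candidate flip sequence of length at most $l$, one can verify in polynomial time that each successive pair of triangulations differs by a single valid flip and that the final triangulation equals $T_2$. Since the diameter of the flip graph of any $n$-vertex simple polygon is $O(n^2)$, we may assume without loss of generality that $l$ is polynomial in $n$ (otherwise the answer is trivially yes, after checking reachability with at most $O(n^2)$ flips), so the certificate has polynomial length. Hence \textsc{PolyFlip} is in NP.

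For hardness, I would reduce from YRSA. Given an instance $(S,k)$ of YRSA with $|S| = N$ and with $S$ lying on an $n \times n$ grid, $n$ polynomial in $N$, I would apply the construction of Section~\ref{sec_construction} with the stipulated parameters $\beta = 2N$ and $d = nN$, producing the polygon $P_D^*$ together with the two triangulations $T_1$ and $T_2$. The target flip bound is set to $l := 2\beta k + (4d-2)N$. Theorem~\ref{thm:YRSA<->FlipDist} then immediately gives that $(P_D^*, T_1, T_2, l)$ is a yes-instance of \textsc{PolyFlip} if and only if $(S,k)$ is a yes-instance of YRSA.

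What remains is to check that the reduction runs in polynomial time. The polygon $P_D^+$ has $2\beta n + 1 = O(nN)$ vertices, and each of the $N$ sink gadgets adds $2d = O(nN)$ vertices, for a total of $O(nN^2)$ vertices, which is polynomial in $N$. As noted in Section~\ref{sec_construction}, the sink gadgets can be placed so that all vertex coordinates remain polynomial in $n$ (details deferred to the appendix/full version), so the description of $P_D^*$ has polynomial bit-length. The triangulations $T_1$ and $T_2$ are specified explicitly by the construction and can likewise be written down in polynomial time. Finally, $l = 2\beta k + (4d-2)N$ is polynomial in the input size, since $k$ is bounded by the total length of the Hanan grid and hence polynomial in $n$. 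Thus the reduction is polynomial, and combined with Theorem~\ref{thm_yrsa} we conclude that \textsc{PolyFlip} is NP-hard, and therefore NP-complete.

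The only potential obstacle is the verification of polynomial coordinates for $P_D^*$, but this is a routine geometric placement argument already handled in the construction section, so no substantial new work is required here.
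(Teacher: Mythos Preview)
Your proposal is correct and follows essentially the same approach as the paper: membership in NP via the polynomial bound on flip distance, hardness via the reduction from YRSA with $l = 2\beta k + (4d-2)N$ and an appeal to Theorem~\ref{thm:YRSA<->FlipDist}, plus a check that the construction is polynomial-time. You supply a bit more explicit detail on the vertex count and on handling large~$l$ for the NP certificate, but the structure and ingredients match the paper's proof exactly.
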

\begin{proof}
As mentioned in the introduction, the flip distance in polygons is
polynomially bounded, so \textsc{PolyFlip} is in NP. 
We reduce from YRSA.
Let $(S,\beta k)$ be an instance of YRSA as above. 
We construct $P$ and $T_1$, $T_2$ as described above.
This takes polynomial time (see Appendix~\ref{apx_coordinates} for details on the coordinate representation).
By Theorem~\ref{thm:YRSA<->FlipDist}, 
there exists an RSA for $S$ of length at most~$\beta k$ if and only if there exists a flip sequence between $T_1$ and $T_2$ of length at most
$2\beta k + (4d-2)N$. 
\qedopt
\end{proof}

\subsection{Chain Paths}
\label{sec:triangulation_structure}

Now we introduce the \emph{chain path}, our main tool to
establish a correspondence between flip sequences and RSAs.
Let $T$ be a triangulation of~$Q$ (i.e., the polygon $P$ 
without the sink gadgets, cf. Section~\ref{sec_construction}).
A \emph{chain edge} is an edge of~$T$ between the upper and the lower 
chain of~$Q$. 
A \emph{chain triangle} is a triangle of~$T$ that contains two chain edges.
Let $e_1, \dots, e_m$ be the chain edges, sorted from left to 
right according to their intersection with a line
that separates the upper from the lower chain.
For $i = 1, \dots, m$, write $e_i = (u_v, l_w)$ and set $c_i = (v,w)$.
In particular, $c_1 = (0,0)$.
Since $T$ is a triangulation, any two consecutive edges $e_i$, $e_{i+1}$ share one endpoint, while the other endpoints are adjacent on the corresponding chain.
Thus, $c_{i+1}$ dominates $c_i$ and $\|c_{i+1}-c_i\|_1 = 1$.
It follows that $c_1c_2\dots c_m$ is an $x$- and $y$-monotone path, 
beginning at the root.
It is called the \emph{chain path} for $T$.
Each vertex of the chain path corresponds to a chain edge, and each 
edge of the chain path corresponds to a chain triangle. Conversely,
every chain path induces a triangulation $T$ of~$Q$;
see \figurename~\ref{fig_chain_path}.
In the following, we let $b$ denote the upper right endpoint of the 
chain path.
\begin{figure}
\centering
\includegraphics{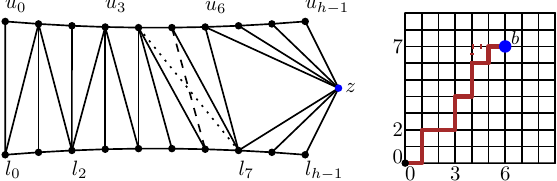}
\caption{A triangulation of~$Q$ and its
chain path. Flipping edges to and from $z$ moves the endpoint $b$ along
the grid.
A flip between chain triangles changes a bend. 
}
\label{fig_chain_path}
\end{figure}
\ShoLong{}{%
We now investigate how flipping edges in~$T$ affects the chain path.

\begin{observation}\label{obs_chain_path_monotone}
Suppose we flip an edge that is incident to~$z$.
Then the chain path is extended by moving $b$ north or east.\noproof
\end{observation}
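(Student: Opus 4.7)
My plan is to analyze the local structure of any triangulation of $P_D^+$ around $z$, show that only two edges incident to $z$ can be flipped, and verify that each such flip extends the chain path by one vertex. The first step is to establish the \emph{fan structure} at $z$: in a triangulation $T$ of $P_D^+$, let $e_m = u_v l_w$ be the rightmost chain edge, so that $b = (v,w)$. Since no chain edge lies strictly to the right of $e_m$, every triangle in the region right of $e_m$ must have $z$ as a vertex (otherwise it would span two vertices of the same chain, producing a chain edge further to the right, a contradiction). Hence this region is triangulated as a fan from $z$, with $z$ adjacent exactly to $u_v,u_{v+1},\ldots,u_{\beta n}$ and $l_w,l_{w+1},\ldots,l_{\beta n}$, together with the central chain triangle $(z,u_v,l_w)$.

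Next I would enumerate the flippability of each edge incident to $z$. The polygon edges $zu_{\beta n}$ and $zl_{\beta n}$ are not diagonals. For an interior upper fan edge $zu_i$ with $v < i < \beta n$, the candidate replacement is the chord $u_{i-1}u_{i+1}$; but the upper chain is in convex position and bulges outward, so this chord leaves $P_D^+$ and the flip is invalid. Equivalently, each interior upper chain vertex is reflex in $P_D^+$, a fact that follows from the double-chain property (three upper chain vertices together with any lower chain vertex must form a non-convex quadrilateral). A symmetric argument rules out flipping any interior lower fan edge $zl_j$ with $w < j < \beta n$. The only flippable edges incident to $z$ are therefore $zu_v$ and $zl_w$, namely the two edges bordering the central chain triangle.

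Finally, I would verify that each of these two flips does extend the chain path. Flipping $zu_v$ occurs inside the quadrilateral $(z,l_w,u_v,u_{v+1})$ and produces the edge $u_{v+1}l_w$; this chord is a valid diagonal by the visibility between the two chains of $D$, and the quadrilateral is convex since $u_v$ lies outside the triangle $(z,l_w,u_{v+1})$. Because the flip alters only the two incident triangles, every previously present chain edge is preserved, and $u_{v+1}l_w$ becomes the unique new rightmost chain edge. The chain path therefore gains exactly one new vertex $(v+1,w)$ at its right end, so $b$ moves one step east. The symmetric case for $zl_w$ moves $b$ one step north. I expect the main obstacle to be the convexity argument ruling out all interior fan flips; my plan is to handle it cleanly by showing that every interior chain vertex is reflex in $P_D^+$, a fact that follows directly from the combinatorial definition of a double chain.
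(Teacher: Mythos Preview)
Your argument is correct. The paper itself offers no proof for this observation (it is stated with only a \textsc{qed} symbol), treating the claim as immediate from the fan structure of the triangulation at~$z$; your write-up simply makes that structure explicit and checks the two relevant flips. One small remark: the observation already presupposes that the flip is performed, so you do not strictly need to argue that $zu_v$ and $zl_w$ are flippable (your convexity check for the quadrilateral $(z,l_w,u_v,u_{v+1})$), only that no \emph{other} edge incident to $z$ is---which you do via the reflexivity of interior chain vertices.
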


\begin{observation}
Suppose that $T$ contains at least one chain triangle.
When we flip the rightmost chain edge, we shorten the chain path at $b$.\noproof
\end{observation}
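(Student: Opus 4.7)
The plan is to examine the two triangles incident to the rightmost chain edge $e_m$ and to show that flipping $e_m$ produces an edge incident to $z$, which is not a chain edge, so that $c_m = b$ drops out of the chain path.

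First I would identify the triangle of $T$ lying immediately to the right of $e_m = (u_v, l_w)$. Its third vertex must be one of the vertices of $P_D^+$ on the boundary arc strictly to the right of $e_m$, namely one of $l_{w+1},\dots,l_{\beta n}, z, u_{\beta n},\dots,u_{v+1}$. If the apex were some $l_{w'}$ with $w' > w$, then $(u_v, l_{w'})$ would be a chain edge of $T$ strictly to the right of $e_m$, contradicting the choice of $e_m$ as the rightmost; the case of an apex $u_{v'}$ with $v' > v$ is symmetric. Hence the apex must be $z$, so the triangle on the right is $(u_v, l_w, z)$ and both $(u_v, z)$ and $(l_w, z)$ are edges of $T$.

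Since $T$ contains a chain triangle, we have $m \geq 2$, so there is a chain triangle on the left of $e_m$, bounded by $e_{m-1}$ and $e_m$. By the definition of the chain path, its third vertex is $u_{v-1}$ when $c_m - c_{m-1}$ is an east step and $l_{w-1}$ when it is a north step. The quadrilateral to be flipped therefore has $u_v$ and $l_w$ as the endpoints of the old diagonal, with $z$ on its right and this apex on its left; flipping $e_m$ replaces it by $(u_{v-1}, z)$ or $(l_{w-1}, z)$, respectively. In either case the new edge is incident to $z$ and hence not a chain edge, so the chain edges of the flipped triangulation are exactly $e_1,\dots,e_{m-1}$ and its chain path is $c_1 \dots c_{m-1}$, obtained from the old one by dropping $b$.

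The only technical point is to check that the flip of $e_m$ is legal, i.e., that the resulting quadrilateral is convex. This follows from $z$ lying in the flip-kernel of $D$ and to the right of $l_{\beta n}u_{\beta n}$, which places $z$ on the correct side of each chain edge and forces convex position with any apex vertex on the opposite chain. I would invoke this geometric property directly rather than go through a case analysis, since it is the main purpose of the flip-kernel definition.
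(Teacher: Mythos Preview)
Your argument is correct. The paper states this observation without proof (it is marked \texttt{\textbackslash noproof}), treating it as immediate from the chain-path correspondence; your write-up simply spells out the details the authors left implicit, namely that the right neighbour of $e_m$ must have apex $z$ (else $e_m$ is not rightmost) and that the left neighbour is the chain triangle on $e_{m-1}e_m$, so the flip replaces $e_m$ by an edge to $z$ and deletes $c_m=b$ from the path.
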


Finally, we can flip an edge between two chain triangles.
This operation is called a \emph{chain~flip}.

\begin{observation}\label{lem_bend_chain_flip}
A chain flip changes a bend from east to north to a bend from north to east, or vice versa.
\end{observation}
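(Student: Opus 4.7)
The plan is to analyze the local picture around the flipped edge and invoke the defining non-convexity property of double chains. Let $e=e_i$ be the chain edge that is flipped, and let $\Delta_1, \Delta_2$ be the two chain triangles meeting along $e$, with $\Delta_1$ also containing the neighboring chain edge $e_{i-1}$ and $\Delta_2$ containing $e_{i+1}$. The four vertices of the quadrilateral $Q = \Delta_1 \cup \Delta_2$ are precisely the endpoints appearing in $e_{i-1}, e_i, e_{i+1}$, and this flip is performed only if $Q$ is convex.

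First I would show that $c_{i-1} \to c_i$ and $c_i \to c_{i+1}$ must point in different directions, so that $c_i$ is genuinely a bend of the chain path. Suppose instead both steps point east. Then, writing $e_i = (u_v, l_w)$, we have $e_{i-1} = (u_{v-1}, l_w)$ and $e_{i+1} = (u_{v+1}, l_w)$, so $Q$ has three vertices $u_{v-1}, u_v, u_{v+1}$ on the upper chain and one vertex $l_w$ on the lower chain. By the defining property of a double chain, such a quadrilateral is non-convex, so $e_i$ is not flippable, a contradiction. The case where both steps point north is symmetric. Hence the two steps point in different directions.

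Next, assume without loss of generality that $c_{i-1} \to c_i$ is east and $c_i \to c_{i+1}$ is north, so that $c_i$ is an east-to-north bend. With $e_i = (u_v, l_w)$ we have $e_{i-1} = (u_{v-1}, l_w)$ and $e_{i+1} = (u_v, l_{w+1})$, and the four vertices of $Q$ are $u_{v-1}, u_v$ on the upper chain and $l_w, l_{w+1}$ on the lower chain. Flipping $e_i = u_v l_w$ replaces it with $u_{v-1} l_{w+1}$, which is a new chain edge corresponding to the grid point $(v-1, w+1)$. The updated chain path therefore has vertices $c_{i-1} = (v-1, w)$, $(v-1, w+1)$, $c_{i+1} = (v, w+1)$ at the relevant positions, i.e., a north-to-east bend. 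All other vertices of the chain path remain unchanged because the flip affects only $e_i$. The reverse case (north-to-east becomes east-to-north) follows by the same argument with the roles of the chains swapped.

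The only delicate point is making sure the quadrilateral analysis correctly uses the double-chain non-convexity hypothesis and that the index bookkeeping for the chain edges matches the direction of the two steps of the chain path; once that is set up, the conclusion about the bend swap is immediate from reading off the new middle grid point.
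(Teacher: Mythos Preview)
Your proposal is correct and follows essentially the same approach as the paper: both arguments identify the two possible local configurations of chain triangles around a flippable chain edge and read off the resulting change in the chain path. You spell out more explicitly why the ``same-direction'' case is impossible (invoking the non-convexity of a $3{+}1$ quadrilateral in a double chain), whereas the paper simply asserts the admissible forms of the two adjacent chain triangles; the underlying reasoning is identical.
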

\begin{proof}
If a chain edge $u_il_j$ is incident to two chain triangles and is flippable, then the two triangles must be of the form $u_i u_{i-1} l_j$ and
$l_j l_{j+1} u_i$, or $u_{i+1} u_i l_j$ and $l_{j-1} l_j u_i$.
Thus, flipping $u_il_j$ corresponds exactly to the claimed change in the chain path.
\qedopt
\end{proof}
}%
\ShoLong{}{%
\begin{corollary}
A chain flip does not change the length of the chain path.\noproof
\end{corollary}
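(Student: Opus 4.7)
The plan is to read this off directly from Observation \ref{lem_bend_chain_flip}. That observation tells us that a chain flip merely converts an east-then-north bend of the chain path into a north-then-east bend, or vice versa. Both configurations consist of two consecutive unit edges that join the same pair of grid vertices (the two endpoints of the bend differ by $(+1,+1)$, and only the intermediate corner is relocated). In particular, a chain flip does not modify any edges of the chain path outside this two-edge segment; the starting vertex $c_1 = (1,1)$ and the terminal vertex $b$ therefore remain unchanged.

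It then suffices to observe that the length of a chain path is determined by its two endpoints. Since the chain path is $x$- and $y$-monotone and uses only unit horizontal and vertical steps, every edge increases the $\ell_1$-distance from $(1,1)$ by exactly one, so its total number of edges equals $\|b - (1,1)\|_1$. Because a chain flip leaves both endpoints fixed, this quantity is invariant, which is exactly the claim.

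The only subtlety worth spelling out is that the flip described in Observation \ref{lem_bend_chain_flip} really is a local modification of the chain path at a single interior vertex, so that no vertices are created or destroyed and no other chain edge is affected. This is already handled by the case analysis in Observation \ref{lem_bend_chain_flip}. Beyond that, there is no real obstacle: the corollary is essentially a restatement of the fact that monotone lattice paths between fixed endpoints all have the same length.
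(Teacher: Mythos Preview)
Your proposal is correct and matches the paper's approach: the paper states the corollary with no proof, treating it as an immediate consequence of Observation~\ref{lem_bend_chain_flip}, which is exactly what you do. Your additional remark that the length of a monotone lattice path equals $\|b-(1,1)\|_1$ is a clean way to make the implication explicit.
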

}
\ShoLong{%
The next lemma describes how the chain path is affected by flips;
see Fig.~\ref{fig_chain_path}.
}
{%
We summarize the results of this section in the following lemma:
}
\begin{lemma}\label{lem:structure}
Any triangulation $T$ of~$Q$ uniquely determines a chain path, and vice versa.
A flip in~$T$ corresponds to one of the following operations on the chain path:
(i) move the endpoint $b$ north or east;
(ii) shorten the path at $b$;
(iii) change an east-north bend to a north-east bend, or vice versa.\noproof
\end{lemma}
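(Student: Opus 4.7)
The plan is to handle the bijection and the flip classification in turn, both relying on the double-chain geometry of $P_D^+$ and on the placement of $z$ in the flip-kernel.

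For the bijection, given any triangulation $T$ of $P_D^+$, listing the chain edges $e_1,\ldots,e_m$ from left to right yields a monotone lattice path $c_1 c_2 \cdots c_m$ starting at $c_1=(1,1)$, as already argued in the paragraph preceding the lemma. Conversely, a chain path prescribes all the chain edges and the $m-1$ chain triangles joining them; the residual region to the right of $e_m$ can contain only diagonals incident to $z$, since any chain-to-chain diagonal in that region would contradict the choice of $e_m$ as rightmost, and any within-chain skip-diagonal would form a non-convex quadrilateral with its adjacent triangles by the double-chain property. Because $z$ lies in the flip-kernel and hence sees every vertex of this region, it must be triangulated as the unique fan from $z$, which closes the bijection.

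For the flip classification, I split by the type of the flipped diagonal. A chain edge $u_i l_j$ sandwiched between two chain triangles has two possible local configurations, $\{u_{i-1}u_i l_j,\; u_i l_j l_{j+1}\}$ (an east-north bend at $(i,j)$) and $\{u_i u_{i+1} l_j,\; u_i l_{j-1} l_j\}$ (a north-east bend), and the flip simply exchanges them (case (iii)). The rightmost chain edge $e_m=(u_v,l_w)$ has $z$ as the third vertex of its right-hand triangle, so its flip replaces $e_m$ by a $z$-incident diagonal and removes $b$ from the path (case (ii)). Finally, a $z$-incident diagonal is flippable only when it is adjacent to the triangle $(z,u_v,l_w)$ (or to the boundary triangle containing $u_1 l_1$, when $m=1$): the two candidates $zu_v$ and $zl_w$ yield convex quadrilaterals whose flips produce the new chain edges $u_{v+1}l_w$ and $u_v l_{w+1}$, moving $b$ east or north (case (i)).

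The main obstacle is verifying that no other $z$-incident diagonal is flippable; otherwise the chain path would not change under such a flip and the bijection would be destroyed. Here the analogue of the double-chain property, with $z$ in the flip-kernel playing the role of a vertex on the opposite chain, is essential: any $z$-incident diagonal $zq$ with $q$ a non-extreme vertex of its chain sits between two fan triangles whose third vertices are precisely the two chain-neighbors of $q$, and $z$ together with these three consecutive chain vertices forms a non-convex quadrilateral (with $q$ inside the convex hull of the other three, since the chain bulges toward $z$). Hence $zq$ is not flippable, the enumeration above is exhaustive, and the three listed operations are the only possible effects of a flip on the chain path.
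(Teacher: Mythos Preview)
Your proposal is correct and follows essentially the same route as the paper. In the paper this lemma is presented as a summary of three preceding observations (extending at $z$, shortening at the rightmost chain edge, and chain flips at bends), each stated without proof; you reproduce exactly this case split but supply the missing justification, in particular for the point the paper leaves implicit, namely that no $z$-incident diagonal other than $zu_v$ and $zl_w$ is flippable. Two small remarks: your reason for excluding within-chain ``skip'' diagonals in the residual region is more naturally that such a segment lies outside the polygon (each chain is reflex from the interior), rather than an appeal to the double-chain quadrilateral property; and for the non-flippability of $zq$ it suffices that $q$ is reflex in the quadrilateral $zq_{-}qq_{+}$ (i.e., $z$ lies on the same side of $q_{-}q_{+}$ as $q$, which indeed follows from $z$ being in the flip-kernel and the convexity of the chain), so the stronger parenthetical ``$q$ inside the convex hull of the other three'' is not needed.
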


\subsection{From an RSA to a Short Flip Sequence}
Using the notion of a chain path, we now prove the
``if'' direction of Theorem~\ref{thm:YRSA<->FlipDist}.

\begin{lemma}\label{lem:YRTSP->flip_tour}
Let $k \geq 1$ and $A$ an RSA for $S$ of length $\beta k$.
Then the flip distance between $T_1$ and $T_2$ w.r.t.~$P$ is at 
most $2\beta k + (4d-2)N$.
\end{lemma}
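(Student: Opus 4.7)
The plan is to build the flip sequence by simulating an Eulerian traversal of the RSA $A$ on the chain path of $P_D^+$, interjecting a short gadget-flipping subroutine each time the traversal visits a sink. Fix a depth-first Eulerian traversal $\tau$ of $A$ that starts and ends at the root; since each edge of $A$ is traversed twice, $\tau$ has total length $2k$ in the original grid, which after the $\beta$-scaling becomes a walk $\tilde\tau$ of $2\beta k$ unit N/E/S/W steps in the $\beta n \times \beta n$ grid that visits the scaled position of every sink.

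Starting from $T_1$, I translate each step of $\tilde\tau$ into exactly one flip using Lemma~\ref{lem:structure}: an outward step (N or E) extends the chain path at $b$ by flipping a $z$-edge, while a backtracking step (S or W) shortens the chain path at $b$ by flipping the rightmost chain edge back to a $z$-edge. Because $\tau$ is depth-first, each backtracking step undoes the most recent unmatched outward step, so the direction of shortening always matches the last edge of the chain path. When $\tilde\tau$ first reaches the scaled position $(\beta x, \beta y)$ of a sink $s=(x,y)$, I pause and insert a $(4d-2)$-flip subroutine: one extra N-extension pushes $b$ to $(\beta x, \beta y+1)$ and thereby creates the chain edges $u_{\beta x} l_{\beta y}$ and $u_{\beta x} l_{\beta y+1}$; these chain edges, together with the boundary of the gadget $D_s$, enclose a sub-polygon combinatorially identical to $P_{D_s}^{u_{\beta x}}$ from Definition~\ref{def_p_d_plus}, with $u_{\beta x}$ playing the role of the apex inside the flip-kernel. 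Inside this sub-polygon I apply the $(4d-4)$-flip sequence sketched in \figurename~\ref{fig_dc_steiner} to take $D_s$ from its upper to its lower extreme, and a single S-shortening flip returns $b$ to $(\beta x, \beta y)$ so that $\tilde\tau$ can resume. Summing gives $2\beta k + (4d-2)N$ flips in total.

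For correctness, since $\tilde\tau$ closes at the root and each subroutine's two detour flips cancel in pairs, the chain path ultimately collapses back to the single vertex $(1,1)$; hence every interior diagonal of $P_D^+$ is once again incident to $z$, and combined with every gadget now being in its lower extreme this is precisely $T_2$. The main technical point that I expect to be delicate is confirming that the sub-polygon produced by the N-detour flip really is $P_{D_s}^{u_{\beta x}}$ with $u_{\beta x}$ sitting in the flip-kernel of $D_s$, so that the $(4d-4)$-flip sequence applies verbatim; both hold by construction in Section~\ref{sec_construction}, where $u_{\beta x}$ is explicitly the \emph{unique} vertex of the outer polygon lying inside both the hourglass and the flip-kernel of $D_s$. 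Non-interference between different sink subroutines is automatic from the YRSA distinct-$y$ assumption, which places every gadget on a different edge of the lower chain of $P_D^+$, so the local detours and gadget flips of one sink neither create nor destroy chain edges needed by another.
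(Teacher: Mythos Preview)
Your proposal is correct and follows essentially the same approach as the paper: perform a depth-first traversal of $A$ via chain-path extensions and shortenings ($2\beta k$ flips), and at each sink insert an extra north step, the $(4d-4)$-flip gadget conversion, and a south step ($4d-2$ flips per sink). You spell out several details the paper leaves implicit---the identification of the sub-polygon with $P_{D_s}^{u_{\beta x}}$, the role of the YRSA distinct-$y$ hypothesis in guaranteeing non-interference, and why the walk closes correctly---but the underlying argument is the same.
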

\begin{proof}
The triangulations $T_1$ and $T_2$ both contain a triangulation of~$Q$ 
whose chain path has its endpoint $b$  at the root.
We use Lemma~\ref{lem:structure} to generate flips inside $Q$ so
that $b$ traverses $A$ in a depth-first manner.
This needs $2\beta k$ flips.

Each time $b$ reaches a sink $s$, we move $b$ north.
This creates a chain triangle that allows the edges in
the sink gadget $D_s$ to be flipped to the auxiliary vertex
in the flip-kernel of $D_s$. The triangulation of $D_s$ can then be changed
with $4d-4$ flips; see Lemma~\ref{lem_lower_bound}.
Next, we move $b$ back south and continue the traversal.
Moving $b$ at~$s$ needs two additional flips, so we take $4d-2$ flips per sink,
for a total of $2\beta k + (4d-2)N$ flips.
\qedopt
\end{proof}

\subsection{From a Short Flip Sequence to an RSA}
Finally, we consider the ``only if'' direction in Theorem~\ref{thm:YRSA<->FlipDist}.
Let $\tau$ be a flip sequence on~$Q$.
We say that $\tau$ \emph{visits} a sink $s = (x_s,y_s)$ if $\tau$ has at least one triangulation that contains the chain 
triangle $u_{x_s} l_{y_s} l_{y_s + 1}$.
We call $\tau$ a \emph{flip traversal} for~$S$ if (i) $\tau$ begins and ends in the triangulation whose corresponding chain path has its endpoint~$b$ at the root and (ii) $\tau$ visits every sink in~$S$.
The following lemma shows that every short flip sequence~$\sigma$ in~$P$ can be mapped to a flip traversal (where with ``short'', we mean $|\sigma| < (d-1)^2$).

\begin{lemma}\label{lem:flip_traversal}
Let $\sigma$ be a flip sequence from $T_1$ to $T_2$ w.r.t.~$P$ 
with $|\sigma| < (d-1)^2$.
Then there is a flip traversal $\tau$ for $S$ 
with $|\tau| \leq |\sigma|-(4d-4)N$.
\end{lemma}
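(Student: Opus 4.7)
The plan is to construct $\sigma_1$ by projecting $\sigma$ onto $P_D^+$, essentially discarding the flips that take place inside the sink gadgets. First I would classify each flip of $\sigma$ as either \emph{external}---all four vertices of the flipped quadrilateral lie in $P_D^+$---or \emph{internal} to some gadget $D_s$---at least one vertex lies in the interior of $D_s$. The fact that, among all $P_D^+$-vertices, only $u_{\beta x}$ is visible from the interior of $D_s$ implies that every internal flip of $s$ has its four vertices in $D_s\cup\{u_{\beta x}\}$, i.e.\ it is a flip of the extended sink-gadget polygon.

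Next, I would bound the number of internal flips via Lemma~\ref{lem_lower_bound}. Along $\sigma$, the sub-triangulation of $D_s$, possibly using $u_{\beta x}$ as the auxiliary point in its flip-kernel, is modified only by internal flips of $s$ and must evolve from the upper extreme (inherited from $T_1$) to the lower extreme (from $T_2$). By Lemma~\ref{lem_lower_bound} this costs at least $4d-4$ internal flips per sink. Summing over all $N$ sinks, $\sigma$ contains at least $(4d-4)N$ internal flips, hence at most $|\sigma|-(4d-4)N$ external flips.

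Then I would define $\sigma_1$ as the subsequence of $\sigma$ formed by the external flips, each interpreted as a flip on $P_D^+$. Each such flip is valid in $P_D^+$ because its quadrilateral is spanned by $P_D^+$-vertices and lies inside $P_D^*$, which in turn lies inside $P_D^+$. Since the $P_D^+$-portion of both $T_1$ and $T_2$ is the trivial triangulation whose chain path is at the origin, $\sigma_1$ starts and ends at that triangulation, and has length at most $|\sigma|-(4d-4)N$. For the sink-visit property, I would use the hypothesis $|\sigma|<(d-1)^2$ together with Lemma~\ref{lem_empty_wedge} applied to $D_s$: it yields a triangulation $T^{(i)}$ in $\sigma$ whose $D_s$-sub-triangulation contains a triangle with one vertex on each chain of $D_s$ and its third vertex in the interior of the hourglass, which by construction must be $u_{\beta x}$. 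This configuration forces the chain triangle $u_{\beta x}l_{\beta y}l_{\beta y+1}$ to appear in the $P_D^+$-restriction of $T^{(i)}$, and hence $\sigma_1$ visits $s$.

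The main obstacle is verifying that the projection onto $P_D^+$ behaves well: one must check that the external flips of $\sigma$, viewed in order, act as valid flips on the evolving $P_D^+$-triangulation, and that the chain-triangle condition needed for $\sigma_1$ to visit $s$ can indeed be inferred from the hourglass-triangle condition provided by Lemma~\ref{lem_empty_wedge}. The delicate case is an internal flip of $D_s$ whose quadrilateral spans the boundary of $D_s$, possibly modifying a $P_D^+$-edge such as $u_{\beta x}l_{\beta y}$; the argument must show that such ``boundary'' internal flips do not break the correspondence between external flips and $\sigma_1$'s flips, while still being compatible with the $(4d-4)N$ lower bound from Lemma~\ref{lem_lower_bound}.
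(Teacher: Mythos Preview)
Your high-level plan---project the flip sequence onto $P_D^+$ and charge at least $4d-4$ flips to each sink gadget---is exactly the paper's strategy. However, the technical claim on which your execution rests is false: it is \emph{not} true that an interior vertex of $D_s$ sees only $u_{\beta x}$ among the vertices of $P_D^+$. The construction only guarantees that $u_{\beta x}$ is the unique $P_D^+$-vertex lying in the \emph{hourglass} (and flip-kernel) of $D_s$; vertices of $D_s$ near the mouth $l_{\beta y}l_{\beta y+1}$ can see other upper-chain vertices, $z$, and neighbouring lower-chain vertices. Hence a triangulation of $P_D^*$ may contain triangles with one vertex on a chain of $D_s$ and two on $P_D^+$, or a ``separating'' triangle with two vertices on $D_s$ and apex at some $u_i\neq u_{\beta x}$. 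Your dichotomy into external flips (all four vertices in $P_D^+$) and internal flips (all four in $D_s\cup\{u_{\beta x}\}$) is therefore not exhaustive, and the subsequence of external flips need not be a valid flip sequence on $P_D^+$.

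The paper fixes this by defining an explicit projection from every triangulation of $P_D^*$ to a triangulation $T^+$ of $P_D^+$ and to triangulations $T_s$ of each $P_{D_s}^{u_s}$: any triangle with one vertex on a chain of $D_s$ has that vertex collapsed to $l_{\beta y}$ or $l_{\beta y+1}$; the separating triangle $\Delta_s$ (identified via the dual tree) is sent to $l_{\beta y}l_{\beta y+1}u_i$ in $T^+$ and to a triangle with apex $u_s$ in $T_s$. A four-case analysis then shows that each flip in $P_D^*$ induces at most one flip in exactly one of the projected polygons; in particular, a flip between $\Delta_s$ and a neighbouring $T^+$-triangle contributes to $\sigma_1$ even though two of its four vertices lie on $D_s$. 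This is precisely the ``delicate case'' you flag at the end, and it cannot be sidestepped: without the $\Delta_s$ bookkeeping neither the inequality $|\sigma_1|+\sum_s|\sigma_s|\le|\sigma|$ nor the visit property (which follows because at some moment $\Delta_s$ has apex $u_{\beta x}$, so its image in $T^+$ is $l_{\beta y}l_{\beta y+1}u_{\beta x}$) goes through.
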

\begin{proof}
We show how to obtain a flip traversal $\tau$ for
$S$ from~$\sigma$.
Let $T$ be a triangulation of~$P$.
A triangle of~$T$ is an \emph{inner triangle} if all its sides are 
diagonals. 
It is an \emph{ear} if two of its sides are polygon edges.
By construction, every inner triangle of~$T$ must have
(i) one vertex incident to $z$ (the rightmost vertex of~$Q$), 
or (ii) two vertices incident to a sink gadget (or both).
There can be only one triangle of type (ii) per sink gadget.
The weak (graph theoretic) dual of~$T$ is a tree 
in which ears correspond to leaves and inner triangles have degree $3$.

For a sink~$s = (x_s,y_s)$, let $D_s$ be the corresponding sink gadget.
It lies between the vertices $l_{y_s}$ and $l_{y_s + 1}$ and has exactly $u_{x_s}$ in its flip kernel.
For brevity, we will write $l_s$ for $l_{y_s}$, $l'_{s}$ for $l_{y_s+1}$, and $u_s$ for $u_{x_s}$.
We define a triangle $\Delta_s$  for $D_s$.
Consider the bottommost edge~$e$ of~$D_s$, and let $\Delta$ be the triangle of~$T$ that is incident to $e$.
By construction, $\Delta$ is either an ear of~$T$, or it is the triangle defined by $e$ and $u_s$.
In the latter case, we set $\Delta_s = \Delta$.
In the former case, we claim that $T$ has an inner triangle $\Delta'$ with two vertices on $D_s$:
follow the path from $\Delta$ in the weak dual of~$T$;
while the path does not encounter an inner triangle, the next triangle must have an edge of $D_s$ as a side.
There is only a limited number of such edges, so eventually we must meet an inner triangle $\Delta'$.
We then set $\Delta_s = \Delta'$; see \figurename~\ref{fig_delta_s}.
Note that $\Delta_s$ might be $l_s l_s' u_s$.

\begin{figure}
\centering
\includegraphics{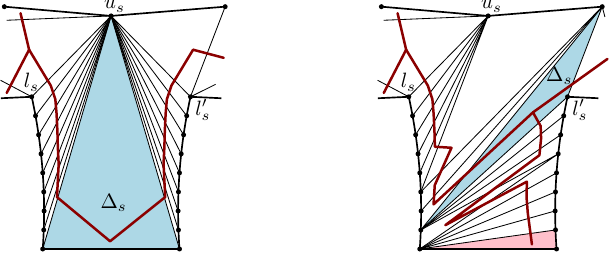}
\caption{Triangulations of $D_s$ in~$P$ with $\Delta_s = \Delta$ (left),
and with $\Delta$ being an ear (red) and $\Delta_s$ an inner triangle (right).
The fat tree indicates the dual.
}
\label{fig_delta_s}
\end{figure}

For each sink $s$, let the polygon $Q_s$ consist of $D_s$ extended by the vertex~$u_s$ (cf.~Definition~\ref{def_p_d_plus}).
Let $T$ be a triangulation of~$P$.
We show how to map $T$ to a triangulation $T_Q$ of~$Q$ and to triangulations $T_s$ of $Q_s$, for each~$s$.

We first describe $T_Q$.
It contains every triangle of~$T$ with all three vertices in~$Q$.
For each triangle $\Delta$ in~$T$ with two vertices on $Q$ and one vertex on the left chain of a sink gadget $D_s$, we replace the vertex on $D_s$ by~$l_s$.
Similarly, if the third vertex of $\Delta$ is on the right chain of $D_s$, we replace it by $l_s'$.
For every sink $s$, the triangle $\Delta_s$ has one vertex at a point $u_i$ of the upper chain.
In $T_Q$, we replace $\Delta_s$ by the triangle $l_s l_s' u_i$.
No two triangles in $T_Q$ overlap, and they cover all of~$Q$.
Thus, $T_Q$ is indeed a triangulation of~$Q$.

Now we describe how to obtain $T_s$, for a sink $s \in S$.
Each triangle of~$T$ with all vertices on $Q_s$ is also in~$T_s$. 
Each triangle with two vertices on $D_s$ and one vertex not in 
$Q_s$ is replaced in $T_s$ by a triangle whose 
third vertex is moved to
$u_s$ in $T_s$ (note that this includes~$\Delta_s$);
see \figurename~\ref{fig_local_triangulations}. Again, all triangles
cover $Q_s$ and no two triangles overlap.

\begin{figure}
\centering
\includegraphics[width=\textwidth]{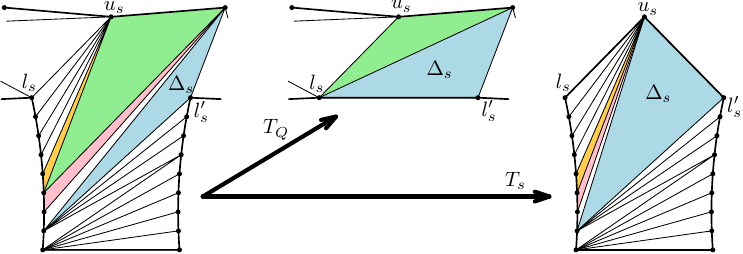}
\caption{Obtaining $T_Q$ and $T_s$ from $T$.}
\label{fig_local_triangulations}
\end{figure}

Finally, we show that a flip in~$T$ corresponds to at most one flip either in $T_Q$ or in precisely one $T_s$ for some sink~$s$.
We do this by considering all the possibilities for two triangles that share a common flippable edge.
By construction, no two triangles that are mapped to two different triangulations $T_s$ and $T_t$ for sinks $s \neq t \in S$ can share an edge.

\textbf{Case 1.}  We flip an edge between two triangles that are either both mapped to $T_Q$ or to $T_s$ and are different from $\Delta_s$.
This flip clearly happens in at most one triangulation.

\textbf{Case 2.} We flip an edge between a triangle~$\Delta_1$ that is mapped to $T_s$ and a triangle~$\Delta_2$ that is mapped to $T_Q$, such that both $\Delta_1$ and $\Delta_2$ are different from~$\Delta_s$.
This results in a triangle $\Delta_1'$ that is incident to the same edge of $Q_s$ as $\Delta_1$, and a triangle~$\Delta_2'$ having the same vertices of~$Q$
as~$\Delta_2$.
Since the apex of~$\Delta_1$ is a vertex of the upper chain or~$z$
(otherwise, it would not share an edge with $\Delta_2$), it is
mapped to $u_s$, as is the apex of~$\Delta_1'$.  Also, the apex of~$\Delta_2'$ is on
the same chain of~$D_s$ as the one of~$\Delta_2$.  Hence, the flip affects neither $T_Q$ nor~$T_s$.

\textbf{Case 3.}
We flip the edge between a triangle $\Delta_2$ mapped to $T_Q$ and
$\Delta_s$.
By construction, this can only happen if $\Delta_s$ is an inner triangle.
The flip affects only $T_Q$, because the new inner triangle
$\Delta_s'$ is mapped to the same triangle in $T_s$ as $\Delta_s$, since both
apexes are moved to $u_s$.

\textbf{Case 4.}
We flip the edge between a
triangle $\Delta$ of $T_s$ and $\Delta_s$. Similar to Case~3, this
affects only $T_s$, because the new
triangle~$\Delta_s'$ is mapped to the same triangle in $T_Q$ as~$\Delta_s$, 
since the two corners are always
mapped to $l_s$ and $l_s'$.

Thus, $\sigma$ induces a flip sequence
$\tau$ in~$Q$ and flip sequences $\sigma_s$ in each $Q_s$
so that $|\tau| + \sum_{s \in S} |\sigma_s| \leq |\sigma|$.
Furthermore, each flip sequence $\sigma_s$ transforms $Q_s$
from one extreme triangulation to the other.
Since $|\sigma| < (d-1)^2$, Lemma~\ref{lem_empty_wedge} tells us that the 
triangulations~$T_s$ have to be transformed so that $\Delta_s$ has a vertex at $u_s$ at some point. Moreover, by Lemma~\ref{lem_lower_bound}, we have
$|\sigma_s| \geq 4d-4$ for each $s \in S$.
Thus, $\tau$ is a flip traversal,
and $|\tau| \leq |\sigma| - N(4d-4)$, as claimed.
\qedopt
\end{proof}

In order to obtain a static RSA from a changing flip traversal,
we use the notion of a \emph{trace}.
A \emph{trace} is a domain on the %
grid.
It consists  of \emph{edges} and \emph{boxes}: an edge is a line segment of 
length $1$ whose endpoints have positive integer coordinates; a box is a 
square of side length~$1$ whose corners have positive integer coordinates.
Similar to arborescences, we require that a trace $R$ (i) is (topologically) 
connected; (ii) contains the root $(0,0)$; and (iii) from every grid point 
contained in~$R$ there exists an $x$- and $y$-monotone path to the root 
that lies completely in~$R$.
We say $R$ is a \emph{covering trace} for~$S$ (or, $R$ \emph{covers} $S$) if every sink in~$S$ is covered by~$R$ (i.e., incident to a box or an edge in~$R$).

Let $\tau$ be a flip traversal as in Lemma~\ref{lem:flip_traversal}.
By Lemma~\ref{lem:structure}, each triangulation in~$\tau$ 
corresponds to a chain path.
This gives a covering trace $R$ for $S$ in the following way.
For every flip in $\tau$ that extends 
the chain path, we add the corresponding edge to $R$.
For every flip in $\tau$ that changes a bend, we add the corresponding 
box to $R$. Afterwards, we remove from $R$ all edges that coincide 
with a side of a box in~$R$.
Clearly, $R$ is (topologically) connected. Since $\tau$ is a 
flip traversal for $S$, every sink
is covered by~$R$.
Note that every grid point $p$ in~$R$ is connected to the root by an 
$x$- and $y$-monotone path on $R$, since at some point $p$ belonged
to a chain path in~$\tau$.
Hence, $R$ is indeed a trace, the unique \emph{trace of~$\tau$}.
Note that not only a flip traversal but any flip sequence starting with a zero-length chain path defines a trace in this way.

Next, we define the \emph{cost} of a trace $R$, $\text{cost}(R)$, so that 
if $R$ is the trace of a flip traversal~$\tau$, then $\text{cost}(R)$ 
gives a lower bound on $|\tau|$. 
An edge has cost $2$.
Let $B$ be a box in~$R$.
A \emph{boundary side} of~$B$ is a side that is not part of another box.
The cost of~$B$ is~$1$ plus the number of boundary sides of~$B$.
Then, $\text{cost}(R)$ is the total cost over all boxes and edges in~$R$.
For example, the cost of a tree is twice the number of its edges, and the cost of 
a rectangle is its area plus its perimeter.
An edge can be interpreted as a degenerated box, having two boundary 
sides and no interior.

\begin{proposition}\label{prp_cost}
Let $\tau$ be a flip traversal and
$R$ the trace of  $\tau$. Then $\text{cost}(R) \leq |\tau|$.
\end{proposition}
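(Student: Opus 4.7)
The plan is to prove $\text{cost}(R) \le |\sigma_1|$ by accounting for each type of flip in $\sigma_1$. Write $|\sigma_1| = E + S + B_f$, where $E$, $S$, and $B_f$ count extending, shortening, and bend flips, respectively. Since $\sigma_1$ starts and ends with the trivial chain path at the root, the path's length is preserved, giving $E = S$ and hence $|\sigma_1| = 2E + B_f$. Decompose the cost as $\text{cost}(R) = 2|R_e| + |R_B| + \text{bdry-total}$, where $\text{bdry-total} = \sum_{B \in R_B} \text{bdry}(B)$.

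The inequality is proven by handling the three summands of the cost separately. First, $B_f \ge |R_B|$ because each box of $R_B$ is produced by at least one bend flip at that box. Second, an edge $e \in R_e$ is, by definition, not a side of any box in $R$; hence no bend flip can add or remove $e$ from the chain path, so every occurrence of $e$ begins with an extension and ends with a shortening, contributing at least $2|R_e|$ to $E + S$.

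The main step is the bound $\sum_{s \text{ a side of a box in } R_B}(s^{\mathrm{ext}} + s^{\mathrm{short}}) \ge \text{bdry-total}$. For each grid edge $s$, let $k_s$ denote the number of maximal time intervals during which $s$ appears in the chain path. Because each bend flip at a box $B$ containing $s$ as a side either adds or removes $s$, one obtains the identity
\[
  s^{\mathrm{ext}} + s^{\mathrm{short}} = 2k_s - \sum_{B \ni s} b_B,
\]
where the sum is over the at most two boxes of $R_B$ having $s$ as a side. Summing this identity over all box-side edges reduces the task to a lower bound on $\sum_{s \text{ box side}} k_s$. For each $B \in R_B$, the chain path passes through at least $b_B + 1$ distinct phases at $B$ (one before the first bend at $B$ and one after each subsequent bend), each of which forces some L-shape of $B$ to appear in the chain path, yielding the per-box estimate $\sum_{s \text{ side of } B} k_s \ge 2(b_B + 1)$. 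Summing this over $B$ and carefully tracking the double counting of interior sides (which belong to two boxes) then produces the required global bound.

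The main obstacle is this last step. A boundary side of a box $B$ with $b_B \ge 2$ can be added and then removed entirely by successive bend flips at $B$, contributing nothing to $s^{\mathrm{ext}} + s^{\mathrm{short}}$, so the bound cannot be established side by side. It must be recovered by a global count that aggregates contributions from boundary and interior sides across neighboring boxes; this bookkeeping, carried out using the identity above together with the per-box $k_s$-bound, is the combinatorial heart of the argument.
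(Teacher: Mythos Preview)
Your decomposition $|\sigma_1|=E+S+B_f$ with $E=S$, and the easy bounds $B_f\ge|R_B|$ and $\sum_{e\in R_e}(e^{\mathrm{ext}}+e^{\mathrm{short}})\ge 2|R_e|$, are all fine. The gap is exactly where you say it is: the ``main step'' $\sum_{s\text{ box side}}(s^{\mathrm{ext}}+s^{\mathrm{short}})\ge\text{bdry-total}$ is never proved, and the two tools you propose do not close it. Summing your per-box estimate $\sum_{s\subset B}k_s\ge 2(b_B+1)$ over all boxes gives only $\sum_s m_s k_s\ge 2B_f+2|R_B|$ (with $m_s\in\{1,2\}$ the number of boxes containing~$s$). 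Combining this with your identity and with $k_s\ge 1$ for boundary sides yields
\[
\sum_{s\text{ box side}}(s^{\mathrm{ext}}+s^{\mathrm{short}})
=2\sum_s k_s-4B_f
\;\ge\;\text{bdry-total}-2\bigl(B_f-|R_B|\bigr),
\]
which is weaker than what you need whenever some box is bent more than once. Using the slack in $B_f\ge|R_B|$ recovers only $B_f-|R_B|$ of this, so the whole argument still falls short by $B_f-|R_B|$. The ``careful bookkeeping'' you allude to would therefore have to introduce a genuinely new ingredient, not just assemble the identity and the per-box bound.

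The paper avoids this counting entirely. It argues incrementally: build the trace one flip at a time and track how $\text{cost}$ changes. An extension flip that adds a new grid edge increases the cost by~$2$ (and is matched by a later shortening, since the traversal returns to the root). A chain flip that creates a \emph{new} box~$B$ increases the cost by at most~$1$: just before that flip the chain path already contains two sides of~$B$, so $B$ is adjacent to at least two existing elements (edges or boundary sides of earlier boxes), and the cost they lose pays for the at most two new boundary sides of~$B$. All other flips leave the cost unchanged. Summing gives $\text{cost}(R)\le|\sigma_1|$ directly, with no global double-counting to untangle.
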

\begin{proof}
Let $\varsigma_i$ be the sequence of the first $i$ triangulations of~$\tau$, $R_i$ the trace defined by~$\varsigma_i$, and let $\kappa_i$ be the length of the chain path for the $i$th triangulation.
We will show by induction on $i$ that $\text{cost}(R_i) \leq i + \kappa_i$, for $i = 1, \dots, |\tau|$.
Since $\varsigma_{|\tau|} = \tau$, $R_{|\tau|} = R$, and $\kappa_{|\tau|} = 0$, this gives the desired result.

After the first flip, $R_1$ is an edge (so $\text{cost}(R_1) = 2$), and $\kappa_1 = 1$, which fulfills the invariant.
Consider the $i$th flip.
If the flip extends the chain path, the cost of the trace increases by at most~2, and the length of the chain path increases by~1, fulfilling the invariant.
If the flip contracts the chain path, the trace does not change, but the length of the chain path is decreased by~1, again fulfilling the invariant.
We are therefore left with the case where the flip is a chain flip.
We have $\kappa_{i-1} = \kappa_i$, so we have to show that $\text{cost}(R_i) \leq \text{cost}(R_{i-1}) + 1$.
We may assume that the flip adds a box~$B$ to $R_{i-1}$ (otherwise,
the cost of the trace remains unchanged).
Consider the intersection of the boundary of~$B$ with the one of~$R_{i-1}$.
This intersection contains at least two elements, as the chain path is part of~$R_{i-1}$.
An edge in the intersection becomes a boundary side in~$R_i$, reducing the cost by~$1$.
A boundary side in the intersection vanishes in~$R_i$, also reducing the cost by~$1$.
Thus, adding $B$ creates a box and at most two boundary sides, causing a cost 
of at most~3, but it simultaneously reduces the cost by at least~$2$.
See the examples in \figurename~\ref{fig_boundary_cost}.
The overall cost increases at most by~$1$, and the invariant is maintained.
\qedopt
\end{proof}

\begin{figure}[ht]
\centering
\includegraphics[scale=1.2]{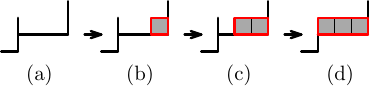}
\caption{Examples of how boundary sides (red) are added to a trace.
To a trace of cost 16 (a) a box (gray) is added (b), which transforms 
two edges in boundary sides and adds two boundary sides, resulting in 
an overall cost of 17.
The next box removes one boundary side and one edge and adds three 
boundary sides (c), the cost becomes 18.
A box might also remove more than two elements (d), reducing the overall 
cost to 17.
}
\label{fig_boundary_cost}
\end{figure}

Now we relate the length of an RSA for $S$ to the cost of a covering trace for $S$, and thus to the length of a flip traversal.
Since each sink is connected in~$R$ to the root by an $x$- and $y$-monotone path, traces can be regarded as generalized RSAs.
In particular, we make the following observation.

\begin{observation}\label{obs_arborescence_in_trace}
Let $R$ be a covering trace for $S$ that contains no boxes, and
let $A_{\tau}$ be a shortest path tree in~$R$ from the root to
all sinks in~$S$.
Then $A_{\tau}$  is an RSA for~$S$.\noproof
\end{observation}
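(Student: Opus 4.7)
The plan is to show that every shortest path from the root to a sink in $R$ is automatically $x$- and $y$-monotone, so that $A_{\sigma_1}$ (after pruning any non-sink leaves) satisfies the three defining properties of an RSA essentially for free.

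First, I would observe that since $R$ contains no boxes, it is a subgraph of the integer grid consisting solely of unit edges. The third defining property of a trace then guarantees, for every sink $s = (s_x,s_y) \in S$, an $x$- and $y$-monotone path in $R$ from $(1,1)$ to $s$ of combinatorial length exactly $(s_x-1) + (s_y-1)$, matching the $L_1$-distance from the root to $s$. Hence the graph distance in $R$ from the root to any sink equals this $L_1$-distance.

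Next, I would invoke the elementary fact that any walk in the integer grid from $(1,1)$ to $(s_x,s_y)$ uses at least $(s_x-1)+(s_y-1)$ edges, with equality if and only if the walk is $x$- and $y$-monotone. Combined with the previous step, this forces every shortest path in $R$ from the root to a sink to be monotone. Taking $A_{\sigma_1}$ to be the union of one shortest path from the root to each sink and iteratively removing any leaf that is not a sink yields a subtree of $R$ rooted at $(1,1)$ in which every leaf is a sink and every edge, when oriented away from the root, points north or east (because it lies on some root-to-sink shortest path). The path from the root to each sink $s$ has length $(s_x-1)+(s_y-1)$, so the three conditions in the definition of an RSA (rooted at the origin, sinks at leaves, monotone paths of the correct length) are all satisfied modulo the translation between the origin and $(1,1)$.

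The only mildly delicate point is purely organizational: making a consistent choice of path when a sink admits several shortest paths, and verifying that iteratively deleting non-sink leaves from the union of these paths still leaves every sink connected to the root. Both are routine because we start from a subgraph that already contains a monotone root-to-sink path for each sink, so nothing relevant is ever cut off.
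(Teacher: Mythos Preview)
Your argument is correct and captures exactly why the observation holds: the trace axiom (iii) gives a monotone root-to-sink path in $R$ of length equal to the $L_1$-distance, so every shortest path in $R$ from the root to a sink is monotone, and the shortest path tree inherits this. The paper itself gives no proof at all (the statement is marked \texttt{\textbackslash noproof}), so there is nothing to compare against; your write-up simply spells out what the authors regard as immediate.

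One small organizational remark: you do not need to \emph{construct} $A_{\sigma_1}$ as a union of chosen paths followed by pruning, since the statement already hands you $A_{\sigma_1}$ as a shortest path tree. (Your union-of-paths description could in principle contain a cycle if two monotone paths diverge and reconverge, so it is cleaner to take $A_{\sigma_1}$ as given and just observe that each root-to-sink path in it is a shortest path in $R$, hence monotone.) The pruning of non-sink leaves is still appropriate to ensure condition (ii) of the RSA definition.
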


If $\tau$ contains no flips that change bends, the corresponding trace $R$ has no boxes. 
Then, $R$ contains an RSA $A_{\tau}$ with $2|A_{\tau}| \leq \text{cost}(R)$, by Observation~\ref{obs_arborescence_in_trace}.
The next lemma shows that, due to the fact that~$\beta$ is even, there is always a shortest covering trace for $S$ that does not contain any boxes.

\begin{lemma}\label{lem:eliminate_chain_phase}
Let $\tau$ be a flip traversal of~$S$.
Then there exists a covering trace $R$ for $S$ 
such that $R$ does not contain a box and such that $\text{cost}(R) \leq |\tau|$.
\end{lemma}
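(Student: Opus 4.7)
My plan is to show that a shortest box-free covering trace for $S$ already has cost at most $|\sigma_1|$. Since the trace $R_{\sigma_1}$ of $\sigma_1$ satisfies $\text{cost}(R_{\sigma_1})\le |\sigma_1|$ by Proposition~\ref{prp_cost}, it suffices to exhibit a box-free covering trace $R$ with $\text{cost}(R)\le|\sigma_1|$. The natural candidate is the (box-free) trace consisting of a shortest RSA $A$ for $S$ in the $\beta n\times \beta n$ grid; its cost is $2|A|=2L^*$, where $L^*$ is the length of the shortest RSA. So the crux is to establish the inequality $2L^*\le |\sigma_1|$.

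To prove this, I would write $|\sigma_1|=2E+F$, where $E$ is the number of extensions (equal to the number of shortenings, since the traversal starts and ends with $b$ at the root), and $F$ is the number of chain flips. Using Lemma~\ref{lem:structure}, I would introduce the union $U=\bigcup_t P_t$ of all chain paths that appear in the triangulations of $\sigma_1$. Since each $P_t$ is a monotone staircase from the root to $b$, $U$ is a monotone subgraph of the grid that contains the root. Moreover, because $\sigma_1$ visits every sink $s=(x,y)\in S$, the sink edge from $(\beta x,\beta y)$ to $(\beta x,\beta y+1)$ lies in some $P_t$, so $U$ passes through every sink, and in particular $L^*\le L^*_U$, where $L^*_U$ denotes the length of a shortest monotone tree in $U$ covering $S$.

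Next, I would analyze the walk of $b$. By Lemma~\ref{lem:structure}, each extension/shortening moves $b$ along an edge of the current chain path (and hence of $U$), while chain flips leave $b$ fixed. Thus $b$ traces a closed walk on $U$ of length $2E$. If $U$ were a tree, the standard fact that a closed walk on a tree traverses each needed edge at least twice would immediately give $2E\ge 2L^*_U\ge 2L^*$, and hence $|\sigma_1|\ge 2E\ge 2L^*$. The complication is that $U$ can contain cycles: whenever a chain flip introduces the ``other'' L-shape around a cell whose original L-shape was already in $U$, a 4-cycle appears. The plan is then an amortized argument: each such cycle allows the walk of $b$ to save at most $2$ steps compared to a tree traversal, but it requires at least $1$ chain flip to create. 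Charging each saved walk-step against the corresponding chain flip, I would conclude that $2E+F\ge 2L^*_U\ge 2L^*$, which is exactly $|\sigma_1|\ge 2L^*$.

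The main obstacle is making this amortized accounting precise in the presence of repeated or interacting chain flips, where the same cell may be flipped several times or where several cells interact so that cycles chain together. The large spacing $\beta=2N$ between sinks is what rescues us here: since no two sinks are adjacent in the grid, each cycle produced by a chain flip is geometrically confined to a small region that cannot shortcut the monotone path to more than one sink at a time, so the per-cycle savings really are bounded by the per-cycle chain-flip cost. Once this bookkeeping is carried out, setting $R=A$ yields a box-free covering trace with $\text{cost}(R)=2L^*\le|\sigma_1|$, as required.
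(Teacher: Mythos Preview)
Your approach is genuinely different from the paper's, and as written it has a real gap.

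The paper does not try to bound $2L^*$ by $|\sigma_1|$ directly. Instead it starts from Proposition~\ref{prp_cost}, takes a covering trace of \emph{minimum} cost (which exists and has cost at most $|\sigma_1|$), and then argues by contradiction that such a minimiser cannot contain a box. The argument is purely local: one picks a maximal box $B$, shows via a few observations that each corner of $B$ must be incident to another element, and then performs a small case analysis on the parity of the coordinates of the top-right corner of $B$ (using only that $\beta$ is even, so all sinks sit at even coordinates). In every case one can slide an edge or delete $B$ to produce a covering trace of no greater cost and strictly fewer boxes. No global counting of extensions versus chain flips is needed.

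Your plan, by contrast, hinges on the inequality $2E+F\ge 2L^*_U$. The charging scheme you describe does not establish this: you say each $4$-cycle in $U$ lets the closed walk of $b$ save at most $2$ steps while costing at least $1$ chain flip. Even granting both estimates, that yields only $2E\ge 2L^*_U-2C$ and $F\ge C$ for $C$ cycles, hence $2E+F\ge 2L^*_U-C$, which is too weak. You explicitly flag this as ``the main obstacle'' and then appeal to the spacing $\beta=2N$, but the appeal is not made concrete: there is no argument that the savings per cycle are actually at most $1$, nor that each cycle must be paid for by at least $2$ chain flips, nor any other mechanism that would close the factor-of-two gap. (In small examples one can indeed have $2E=2L^*-2$ with $F=2$, so the per-cycle savings can be $1$ rather than $2$; but your write-up does not prove this, and it is not obvious how the spacing forces it in general.)

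A second, smaller issue: once chain flips are present, the walk of $b$ need not pass through every sink---only some chain path $P_t$ must. So the ``tree'' lower bound for closed walks does not directly apply to $b$'s walk relative to the sinks; you would need to route the argument through $U$ and through the edges that chain flips add to $U$, which again feeds back into the missing amortisation.

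In short: the target inequality $2L^*\le|\sigma_1|$ is true (it follows from the paper's box-removal argument), but your amortised proof of it is incomplete, and the step you yourself identify as the main obstacle is not carried out.
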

To prove the lemma, we investigate the structure of minimal covering traces.
There exists at least one trace of cost at most $|\tau|$, namely the trace of $\tau$.
Let $\mathcal{R}_1$ be the set of all covering traces for $S$ that have minimum cost.
Let $\mathcal{R}_2 \subseteq \mathcal{R}_1$ be those covering traces among $\mathcal{R}_1$ that contain the  minimum number of boxes.
If $\mathcal{R}_2$ contains a trace without boxes, we are done, as every covering trace in $\mathcal{R}_2$ fulfills the requirements of Lemma~\ref{lem:eliminate_chain_phase}.
We show that this is actually the case by assuming, for the sake of contradiction, that every covering trace in $\mathcal{R}_2$ contains at least one box.

Let  $R \in \mathcal{R}_2$ and suppose that~$R$ contains a box.
Let $B$ be a \emph{maximal} box in~$R$, i.e., $R$ has no other box whose lower left corner has 
both $x$- and $y$-coordinate at least as large as the lower left corner of~$B$.
In order to prove Lemma~\ref{lem:eliminate_chain_phase}, we need several lemmata on traces of minimum cost.

\begin{figure}
\begin{center}
\includegraphics[width=\textwidth]{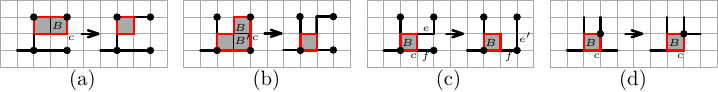}
\end{center}
\caption{
Parts of traces to be modified; the boundary sides are shown in red.
(a) A box that has a corner $c$ with no incident elements can be removed.
(b) Two adjacent boxes that have a shared corner $c$ without any incident elements can be removed.
(c) Replacing a single edge.
(d) Sliding an edge.
}
\label{fig:tree_operations}
\end{figure}

\begin{lemma}\label{lem:no_dangling_corner}
Let $B$ be a maximal box and let $c$ a corner of $B$ that is not the root $(0,0)$.
Then $c$ is incident either to a sink, an edge, or another box.
\end{lemma}
\begin{proof}
Suppose there exists a corner~$c$ for which this is not the case.
Note that such a~$c$ cannot be the lower left corner of~$B$, as there has to be an $x$- and $y$-monotone path to the root.
Hence, we could remove $c$ and $B$ while keeping the sides of~$B$ not incident to~$c$ as edges, if necessary;
see \figurename~\ref{fig:tree_operations}(a).
In the resulting structure, every element still has an $x$- and $y$-monotone path to the root:
If $c$ is the lower right or upper left corner, any path initially passing through $c$ could be rerouted to pass through the corner opposite of~$c$ in~$B$.
If $c$ is the upper right corner of~$B$, no path is passing through~$c$.
Hence, the resulting structure would be a covering trace with smaller cost, contradicting the choice of~$R$.
\qedopt
\end{proof}

\begin{lemma}\label{lem:no_double_box}
Suppose $B$ shares a horizontal side with another box $B'$.
Let $c$ be the right endpoint of the common side.
Then $c$ is incident either to a sink, an edge, or another box.
\end{lemma}
\begin{proof}
Suppose this is not the case.
Then we could remove $B$ and $B'$ from $R$ while keeping the sides not incident to $c$ as edges, if necessary; see \figurename~\ref{fig:tree_operations}(b).
This results in a valid trace that has no higher cost but less boxes than $R$, contradicting the choice of~$R$.
\qedopt
\end{proof}

\begin{lemma}\label{lem:no_vertical_edge}
Let $c$ be the lower right corner of~$B$.
Then $c$ has no incident vertical edge.
\end{lemma}
\begin{proof}
Such an edge would be redundant, since $c$ already has an $x$- and $y$-monotone path to the root that goes through the lower left corner of~$B$.
\qedopt
\end{proof}

\begin{proof}[Proof of Lemma~\ref{lem:eliminate_chain_phase}]
Using the Lemmata~\ref{lem:no_dangling_corner}, \ref{lem:no_double_box}, and \ref{lem:no_vertical_edge}, we derive a contradiction from the choice of~$R$ and the maximal box~$B$.
Note that since $\beta$ is even, all sinks in~$S$ have even $x$- and $y$-coordinates.
We distinguish two cases.

\noindent\textbf{Case 1}.
There exists a maximal box $B$ whose top right corner $c'$ does 
not have both coordinates even.
Suppose that the $x$-coordinate of~$c'$ is odd. (Otherwise, 
mirror the plane at the line $x=y$ to swap the $x$- and the $y$-axis.
Note that the property of being a trace is invariant under mirroring 
the plane along the line $x = y$;
in particular, the choice of~$B$ in~$R$ as a maximal box remains valid)
By Lemma~\ref{lem:no_dangling_corner},
there is at least one edge incident to the top right corner of~$B$ 
(it cannot be a box by the choice of~$B$, and it cannot be a sink 
because of the current case). 
Recall the slide operation for an edge in an arborescence. This operation 
can easily be adapted in an analogous way to traces.
If there is a vertical edge $v$ incident to~$c'$, it cannot be incident
to a sink. Thus, we could slide $v$ 
to the right (together with all other 
vertical edges that are above $v$ and on the supporting line of~$v$).
Hence, we may assume that $c'$ is incident to a single horizontal edge~$e$;
see \figurename~\ref{fig:tree_operations}(c).
By Lemma~\ref{lem:no_dangling_corner}, the bottom right corner $c$ 
of~$B$ must be incident to an element.
We know that $c$ cannot be the top right corner of another box 
(Lemma~\ref{lem:no_double_box}), nor can it be incident to a vertical
segment (Lemma~\ref{lem:no_vertical_edge}). Thus, $c$ is incident to
an element~$f$ that is either a horizontal edge or a box with top left corner~$c$.
But then~$e$ could be replaced by a vertical segment $e'$ incident to~$f$,
and afterwards  $B$ could be removed as in the proof of
Lemma~\ref{lem:no_dangling_corner}, contradicting the choice of~$R$.

\noindent\textbf{Case 2}.
The top right corner of each maximal box has even coordinates.
Let $B$ be the rightmost maximal box.
As before, let $c$ be the bottom right corner of~$B$.
The $y$-coordinate of~$c$ is odd; see \figurename~\ref{fig:tree_operations}(d).
By the choice of~$B$, we know that $c$ is not the top left corner of another box: this would imply that there is another maximal box to the right of~$B$.
We may assume that $c$ is not incident to a horizontal edge, as we could slide such an edge up, as in Case 1.
Furthermore, $c$ cannot be incident to a vertical edge (Lemma~\ref{lem:no_vertical_edge}), nor be the top right corner of another box (Lemma~\ref{lem:no_double_box}).
Thus, $B$ violates Lemma~\ref{lem:no_dangling_corner}, and Case~2 also leads to a contradiction.

Thus, the choice of~$R$ forces a contradiction in either case.
Hence, the minimum number of boxes in a minimum covering trace for~$S$ is~$0$.
\end{proof}

\noindent
Now we can finally complete the proof of Theorem~\ref{thm:YRSA<->FlipDist} by giving the second direction of the correspondence.

\begin{lemma}\label{lem:flip_tour->YRTSP}
Let $k \geq 1$ and
let $\sigma$ be a flip sequence on $P$ from $T_1$ to
$T_2$ with $|\sigma| \leq 2 \beta k + (4d-2)N$.
Then there exists an RSA for~$S$ of length at most~$\beta k$.
\end{lemma}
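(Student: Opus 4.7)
The approach chains together Lemma~\ref{lem:flip_traversal}, Proposition~\ref{prp_cost}, Lemma~\ref{lem:eliminate_chain_phase}, and Observation~\ref{obs_arborescence_in_trace} to extract from $\sigma$ a box-free covering trace for $S$, and then to reinterpret that trace (after dividing coordinates by $\beta$) as an RSA for $S$ whose length rounds down to exactly $k$.

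First, I would check that $|\sigma| < (d-1)^2$, so that Lemma~\ref{lem:flip_traversal} applies. Since any RSA for $S$ fits inside the $n\times n$ bounding box of the sinks, one may assume $k \leq 2nN$ (for a larger $k$ the YRSA instance is trivially ``yes'', or equivalently we may pad so that $n$ is sufficiently large). With $\beta = 2N$ and $d = nN$, both $2\beta k$ and $(4d-2)N$ are $O(nN^2)$ while $(d-1)^2 = \Theta(n^2N^2)$, so the hypothesis of Lemma~\ref{lem:flip_traversal} is satisfied once $n$ is large enough. The lemma then produces a flip traversal $\sigma_1$ for $S$ with
\[
|\sigma_1| \;\leq\; |\sigma| - (4d-4)N \;\leq\; 2\beta k + 2N.
\]

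Next, I apply Proposition~\ref{prp_cost} to the trace of $\sigma_1$ and combine it with Lemma~\ref{lem:eliminate_chain_phase} to obtain a covering trace $R$ for $S$ in the $\beta n \times \beta n$ grid that contains no boxes and satisfies $\text{cost}(R) \leq |\sigma_1| \leq 2\beta k + 2N$. Because $R$ is box-free, it is a union of unit edges of cost $2$ each, so its total length (inside the scaled grid) is $\text{cost}(R)/2 \leq \beta k + N$. By Observation~\ref{obs_arborescence_in_trace}, a shortest-path tree in $R$ rooted at $(1,1)$ yields an RSA $A$ for the scaled sink set $\{(\beta x, \beta y) : (x,y) \in S\}$ of length at most $\beta k + N$ in the scaled grid.

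Finally, I rescale $A$ by a factor of $1/\beta$ to obtain an RSA $A'$ for the original sink set $S$ of length
\[
|A'| \;\leq\; \frac{\beta k + N}{\beta} \;=\; k + \frac{N}{\beta} \;=\; k + \tfrac{1}{2}.
\]
By Theorem~\ref{thm_slide}, a shortest RSA for $S$ can be chosen on the Hanan grid for $S \cup \{(0,0)\}$ and therefore has integer length; together with $|A'| \leq k + \tfrac{1}{2}$ this forces the minimum RSA length for $S$ to be at most $\lfloor k + \tfrac{1}{2}\rfloor = k$, as required.

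The main obstacle I foresee is this final integrality and scaling bookkeeping: the rescaled tree $A'$ itself need not lie on the integer Hanan grid, but that is acceptable because the definition of an RSA does not insist on Hanan-grid placement, and we only need the \emph{existence} of some integer-length RSA of length at most $k$, which is then guaranteed by Theorem~\ref{thm_slide}. The tuning $\beta = 2N$ and the $2N$ of slack remaining after Lemma~\ref{lem:flip_traversal} were chosen precisely so that the additive error $N/\beta = \tfrac{1}{2}$ rounds down cleanly, which is what makes the reduction tight.
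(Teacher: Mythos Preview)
Your proposal is correct and follows essentially the same chain of lemmas as the paper: verify $|\sigma| < (d-1)^2$, apply Lemma~\ref{lem:flip_traversal} to get $|\sigma_1|\le 2\beta k + 2N$, then Lemma~\ref{lem:eliminate_chain_phase} and Observation~\ref{obs_arborescence_in_trace} to get an RSA of length at most $\beta k + N$ on the scaled grid, and finally use the Hanan-grid integrality (Theorem~\ref{thm_slide}) to drop the additive $N$. The only cosmetic difference is the order of the last two steps: the paper first applies Theorem~\ref{thm_slide} on the $\beta n\times\beta n$ grid (so the length becomes a multiple of $\beta$, hence $\le \beta k$ since $\beta>N$) and then divides by $\beta$, whereas you divide by $\beta$ first and then invoke integrality on the original grid to round $k+\tfrac12$ down to $k$; both arguments are equivalent.
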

\begin{proof}
Trivially, there always exists an RSA on $S$ of length less than $2\beta nN$, so we may assume that 
$k < 2nN$.  Hence (recall that $\beta = 2N$ and $d = nN$),
\[
2\beta k + (4d-2)N < 2 \cdot 2N \cdot 2nN + 4 nN^2 -2N < 12nN^2 <  (d-1)^2,
\] 
for $n \geq 14$ and positive~$N$.
Thus, $\sigma$ meets the requirements of Lemma~\ref{lem:flip_traversal}, and therefore
we can obtain a flip traversal $\tau$ for~$S$ with $|\tau| \leq 2\beta k + 2N$.
By Lemma~\ref{lem:eliminate_chain_phase} and Observation~\ref{obs_arborescence_in_trace},
we can conclude that there is an RSA $A$ for $S$ 
that has length at most $\beta k + N$. 
By Theorem~\ref{thm_slide}, there is an RSA~$A'$ for $S$ that is not 
longer than $A$ and that lies on 
the Hanan grid for $S$. The length of $A'$ must be a multiple of~$\beta$.
Thus, since $\beta > N$, we get that $A'$ has length at most $\beta k$.
\qedopt
\end{proof}

\section{Conclusion}
In this paper, we showed NP-hardness of determining a shortest flip sequence between two triangulations of a simple polygon.
This complements the recent hardness results for point sets (obtained by reduction from variants of \textsc{Vertex Cover}).
However, while for point sets the problem is hard to approximate as well, our reduction does not rule out the existence of a polynomial-time approximation
scheme (PTAS), since a PTAS is known for the RSA problem~\cite{rsa_ptas}.
When problems that are hard for point sets are restricted to simple polygons, the application of standard techniques---like dynamic programming---often gives polynomial-time algorithms.
This is, for example, the case for the construction of the minimum weight triangulation.
Our result illustrates that determining the flip distance is a different, harder type of problem.
Is there a PTAS for the flip distance between triangulations of a polygon?
Even a constant-factor approximation would be interesting.

For convex polygons (or, equivalently, points in convex position), the complexity of the problem remains unknown.
Our construction heavily relies on the double chain construction, using many reflex vertices.
Does the problem remain hard if we restrict the number of reflex vertices to some constant fraction?

\bibliographystyle{spmpsci}
\bibliography{bibliography}

\newpage
\appendix

\section{A Note on Coordinate Representation}\label{apx_coordinates}
Since it is necessary for the validity of the proof that the input polygon can be represented in size polynomial in the size of the YRSA instance, we give a possible method to embed the polygon with vertices at rational coordinates whose numerator and denominator are polynomial in~$N$.
By an additional perturbation argument we can guarantee integer coordinates whose values are polynomial in~$N$ (which slightly strengthens the result).
We first introduce the general technique used for the embedding, and then give further details on how the sink gadgets are constructed (using methods similar to~\cite{point_set_hard}).
Finally, we explain how the construction can be transformed to integer points in general position.

\subsection{Placing Points on Arcs}

The main idea of the construction is to place all vertices on rational points on circular arcs.
There are two large arcs where we place the vertices of the upper and the lower chain, and smaller arcs on which we place the vertices of the sink gadgets.
All these circular arcs are chosen from \emph{rational circles}, i.e., circles that are defined by three rational points.
Similarly, a \emph{rational line} is a line trough a rational point with rational slope (or, equivalently, a line defined by two rational points).
It is well-known that, if one of the two intersection points of a rational line with a rational circle is a rational point, then the other intersection point is rational as well (see, e.g.,~\cite[p.~5]{husemoeller}).
Hence, given a rational point~$p$ on a rational circle, we can obtain an arbitrary number of rational points on the circle via different rational lines through~$p$.

Let us apply this for one possible way of constructing the double chain~$D$.
The construction is shown in \figurename~\ref{fig_exact_construction}~(left).
We place the $\beta n + 2$ points of the lower chain on the unit circle (with center at the origin).
Let $\ell_i$ be the line through $(-1,0)$ with slope $1 + \frac{i}{\beta n + 2}$.
For $i = 1, \dots, \beta n / 2 + 1$, we get $\beta n / 2 + 1$ rational points on the upper-left quadrant of the unit circle from the intersections with this family of lines.
We can do the analogous construction for points on the upper-right quadrant by choosing lines through $(1, 0)$ with a negative slope $(-1) - \frac{i}{\beta n+2}$.
In this way, we obtain the vertices of the lower chain of~$D$.
For the upper chain, we place points on the unit circle with origin $(0, 3)$ analogously.
Note that line $\ell_{\beta n / 2 + 1}$ passes through $(1, 3)$, so when picking rational points on the lower-right and lower-left quadrant of the second unit circle for the upper chain, the resulting point set is indeed the vertex set of a double chain in which the line through $l_0$ and $u_{\beta n+1}$ is~$\ell_{\beta n / 2 + 1}$.
Finally, note that all slopes used in the construction have numerators and denominators that are polynomial in~$N$.
Hence, this also holds for the coordinates of the vertices of~$D$.
Note that this is, essentially, the parametrization of the unit circle, as discussed in~\cite{canny}.

\begin{figure}
\centering
\includegraphics{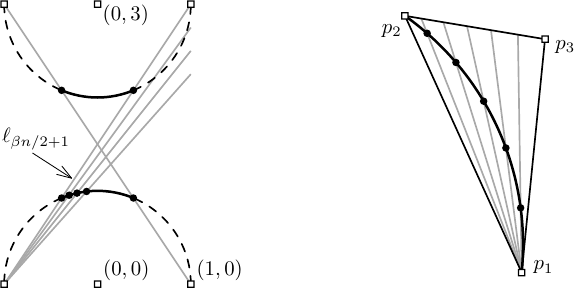}
\caption{Left: Construction of the main double chain~$D$.
Right: Picking points on a circular arc inside a triangle. The line $p_1 p_3$ is tangent to the corresponding circle.}
\label{fig_exact_construction}
\end{figure}

Clearly, this method is not restricted to unit circles.
We now discuss the following main building block for constructing the sink gadgets.
Given three rational points $p_1, p_2, p_3$, we construct a circular arc on a rational circle that starts at $p_1$, ends at $p_2$ and is completely contained inside the triangle $p_1 p_2 p_3$.
Then, we choose an arbitrary number of rational points on that circular arc.
This is illustrated in \figurename~\ref{fig_exact_construction}~(right).
W.l.o.g, let the inner angle of the triangle at $p_1$ be less than or equal to the one at~$p_2$.
Let $Z$ be the circle through $p_1$ and $p_2$ such that the line $p_1 p_3$ is a tangent of~$Z$.
Clearly, $Z$ is well-defined, and the arc between $p_1$ and $p_2$ is inside the triangle.
The circle $Z$ is rational.
(Consider the line that is perpendicular to the line $p_1 p_3$ and passes through~$p_1$.
When mirroring $p_2$ with that line as an axis, the resulting point $p_2'$ is rational and also on~$Z$.)
We can now choose any number of rational points on the circular arc by selecting a family of lines through $p_1$.
To this end, we choose a set of equidistant points on the segment $p_2 p_3$, which, together with $p_1$ define this family of rational lines.
Again, the numerators and the denominators are polynomial in those of $p_1$, $p_2$, and~$p_3$, and the number of points chosen.

\subsection{Constructing Sink Gadgets}
We now construct the sink gadgets.
See \figurename~\ref{fig_coordinates} for an accompanying illustration.
Recall that, since $\beta$ is even, there are no small double chains on neighboring positions on the lower chain.
Hence, for each sink we w.l.o.g.\ can define an orthogonal region within which we can safely draw the small double chain; we call this region the \emph{bin} of the sink (outlined gray in \figurename~\ref{fig_coordinates}).
Consider a sink~$(i,j)$.
The vertical line bounding the left side of its bin passes through the edge $l_{j-1} l_j$ (e.g., at the midpoint of the edge), and the right side of the bin is defined analogously.
(Recall that, since $\beta > 1$, there is no sink at $l_{j-1} l_j$.)
Pick a rational point $p_a$ on the boundary of the bin that is to the left of the directed line $l_j u_{i-1}$ and to the right of the directed line~$l_j u_i$.
Similarly, choose a point $p_b$ that is to the right of the line $l_{j+1} u_{i+1}$ and to the left of the line $l_{j+1} u_i$.
As an additional constraint let $p_a$ be to the left of the line~$p_b u_i$.
Note that such points always exist, and can be easily chosen along the boundary of the bin.
It remains to choose a triangular region with $l_j p_a$ as one side in which we can place the chain of the sink gadget that contains~$l_j$.
For the second chain, the construction is analogous.

For the chain to be visible from $u_i$ but not from $u_{i-1}$, the triangular region has to be to the left of the line $p_a u_i$, and also to the left of the line $l_j u_{i-1}$.
Further, to be visible from all vertices of the other chain, it has to be to the left of the lines $p_a l_{j+1}$ and $p_b l_j$.
Let $x_a$ be the apex of the triangle that is defined by these constraints, and observe that $x_a$ is the intersection of two of the four lines.
We can now add a chain of points on a circular arc inside the triangle $l_j p_a x_a$, as described above.

\begin{figure}
\centering
\includegraphics{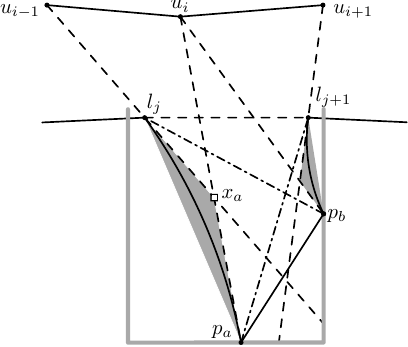}
\caption{Construction of a small double chain for a sink.}
\label{fig_coordinates}
\end{figure}

The coordinates are rational, and since every point can be constructed using only a constant number of other points, the numerator and denominator of each point are polynomial.

\subsection{General Position and Integer Coordinates}
The ways in which a simple polygon can be triangulated is determined by the \emph{order type} of the vertices, i.e., the vector that indicates for
each triple of vertices whether it is oriented clockwise or counterclockwise.
Up to now, we did not care whether the point set is in general position, so there might also be collinear point triples among vertices that are not directly related in the reduction.
By simply multiplying all coordinates by all denominators used, we would obtain integer coordinates with exponential values.
To obtain integer coordinates bounded by a polynomial in the input size and a point set in general position, we can use the following lemma.%
\footnote{The exact time bounds shown in the proof are irrelevant for the NP-hardness reduction (which even requires a different model of computation).
We mention them only as they may be of general interest.}

\begin{lemma}\label{lem_integer_coordinates}
Let~$S$ be a point set with rational coordinates whose numerators and denominators have absolute values of at most~$\xi$.
Then there is a point set~$S'$ with integer coordinates bounded by~$O(\xi^3)$ and a bijection between $S$ and $S'$ such that for every ordered triple of non-collinear points in~$S$, the orientation of the corresponding triple in~$S'$ is the same.
In particular, if $S$ is in general position, then $S$ and $S'$ have the same order type.
Further, $S'$ can be constructed in $O(|S|^2)$ time.
\end{lemma}
\begin{proof}
Consider the set~$L$ of lines that are defined by all pairs of points of~$S$.
Choose $\ell = q_1 q_2 \in L$ and $p \in S \setminus \ell$ such that the horizontal distance $v$ between $p$ and $\ell$ is minimal among all such distances (which is non-zero as $p$ is not on $\ell$). %
Then $v$ is rational, with numerator and denominator in~$O(\xi^2)$.
Further, our choice required $v > 0$.
When multiplying all $x$-coordinates by $2/v$, this distance is at least~$2$.
The basic idea is to round the $x$-coordinates.
The crucial observation is that $p$ has a $y$-coordinate that is between the ones of $q_1$ and $q_2$, as otherwise one of $q_1$ and $q_2$, say, $q_1$, would be horizontally closer to the line through $p$ and $q_2$.
For an ordered triple of points to change its orientation (from, say, clockwise to counterclockwise), the horizontal distance between the point whose $y$-coordinate is between those of the other two points would have to be reduced by more than~$2$.
We can therefore safely round the $x$-coordinates, which, in the worst case, reduces the horizontal distance between $p$ and $\ell$ by at most~$1$.
Hence, for every non-collinear ordered triple of points in~$S$, the orientation of the corresponding triple in the resulting point set is the same.
We repeat the process analogously for the $y$-coordinates, obtaining~$S'$.

The horizontal or vertical distance $v$ can easily be found by checking all triples of points.
We can improve this cubic time bound by considering the dual line arrangement~$\mathcal{A}$ of~$S$ (in which a point $p = (x_p, y_p)$ corresponds to the dual line $p^* : y = x \cdot x_p + y_p$).
The dual arrangement can be constructed in quadratic time~\cite{power_of_duality,constructing_arrangements}.
The shortest vertical distance in the primal corresponds to the shortest vertical distance of a vertex and a side of a triangle defined by three dual lines.
Clearly, the shortest distance can only occur inside a triangle that is not intersected by another line.%
\footnote{Actually, any dual transform will do.
When thinking of the rounding process as a continuous transformation, a change of the order type would involve a collapsing triangular cell of the dual arrangement, indicating a ``close'' point triple.}
Hence, we only need to test the $O(|S|^2)$ triangular cells of~$\mathcal{A}$.
\end{proof}

Hence, if we construct the vertices with rational coordinates such that the vertices are in general position, we can apply Lemma~\ref{lem_integer_coordinates} to have all vertices on the integer grid in general position.

General position can easily be obtained by applying a simple technique used in~\cite[Appendix~A]{point_set_hard}.
Observe first that the vertices of $D$ are in general position.
We take special care when placing the $d-1$ points of each chain of a sink gadget to not produce collinear points.
Note that the final polygon $P$ will have $|P|= 2(\beta n + 2) + 2N (d-1)$ vertices.
Instead of $d-1$ points, we choose $2\binom{|P|}{2} + d - 1$ \emph{candidate points} on the circular arc for the chain.
Consider any line through two already placed points.
This line intersects the circular arc in at most two points, so there are at most two candidate points that may not be points of the double chain because of that line.
As there are less than $\binom{|P|}{2}$ such lines, there are always enough candidate points left for selecting the $d-1$ points for the chain among them.
Thus, the vertices we obtain are in general position.

\end{document}